\newtheorem{theorem}{Theorem}
\title{Two Can Play That Game: An Adversarial Evaluation of a Cyber-alert Inspection System}
\author{Ankit Shah$^\dagger$\thanks{Ankit and Arunesh contributed equally to this work.}, Arunesh Sinha\textsuperscript{\textparagraph}\footnotemark[1]{}, Rajesh Ganesan$^\dagger$, Sushil Jajodia$^\dagger$, Hasan Cam$^\ddagger$\\
$^\dagger$George Mason Univerity, USA, \textsuperscript{\textparagraph}University of Michigan, USA, $^\ddagger$Army Research Lab, USA\\
ashah20@gmu.edu, arunesh@umich.edu, \{rganesan,jajodia\}@gmu.edu, hasan.cam.civ@mail.mil}
\date{}
\begin{document}

\maketitle

\begin{abstract}
    Cyber-security is an important societal concern. Cyber-attacks have increased in numbers as well as in the extent of damage caused in every attack. Large organizations operate a Cyber Security Operation Center (CSOC), which form the first line of cyber-defense. The inspection of cyber-alerts is a critical part of CSOC operations. A recent work, in collaboration with Army Research Lab, USA proposed a reinforcement learning (RL) based approach to prevent the cyber-alert queue length from growing large and overwhelming the defender. Given the potential deployment of this approach to CSOCs run by US defense agencies, we perform a red team (adversarial) evaluation  of this approach. Further, with the recent attacks on learning systems, it is even more important to test the limits of this RL approach. Towards that end, we learn an adversarial alert generation policy that is a \emph{best response} to the defender inspection policy. Surprisingly, we find the defender policy to be quite robust to the best response of the attacker. In order to explain this observation, we extend the earlier RL model to a game model and show that there exists defender policies that can be robust against any adversarial policy. We also derive a competitive baseline from the game theory model and compare it to the RL approach. However, we go further to exploit assumptions made in the MDP in the RL model and discover an attacker policy that overwhelms the defender. We use a \emph{double oracle} approach to retrain the defender with episodes from this discovered attacker policy. This made the defender robust to the discovered attacker policy and no further harmful attacker policies were discovered. Overall, the adversarial RL and double oracle approach in RL are general techniques that are applicable to other RL usage in adversarial environments.
\end{abstract}

\section{Introduction}
In this era of truly pervasive computing, cyber-security has emerged as a major concern. Cyber-attacks have crippled large hospitals~\cite{hospital} and resulted in stolen sensitive information from large companies as well as defense agencies~\cite{defense}. Most large organization, including defense agencies, operate a Cyber Security Operation Center (CSOC). A CSOC is a team of specialized analysts, engineers and responders, responsible for maintaining and improving cyber-security. A critical task in the CSOC is to inspect cyber-alerts generated by various sensing software such as SNORT or tools such as ArcSight. Given the high false positive rates of cyber-alerts~\cite{falsepositive}, it is important to screen the alerts effectively to identify any real attack signal from these alerts, which requires maintaining the queue length of alerts within acceptable limits.

A recent work~\cite{Shah2018}, in collaboration with Army Research Lab, USA (ARL), proposed a reinforcement learning based approach (called CSOC-RL) to manage inspection of cyber-alerts at ARL. The model in the CSOC-RL work is quite detailed to match actual operations at ARL, but, the adversary was modeled as a fixed stochastic agent. Distinct from CSOC-RL, another work~\cite{schlenker2017don} uses a game theoretic approach for cyber-alerts inspection, but, the model used simplifying assumptions such as fixed number of analysts and alerts arriving every hour, as well as a single shot interaction. 
In this work, we collaborate with ARL to perform a red team evaluation of the CSOC-RL approach, using both empirical and theoretical techniques.

Our \emph{first contribution} is to pose the strategy learning problem of an attacker as a reinforcement learning (RL) problem itself, given a fixed deployed defender policy. The CSOC-RL model is based on an underlying queuing process of cyber-alerts  (explained in \S~\ref{sec:background}). The defender has a base number of analysts and a given budget of \emph{additional} analysts. His action is to allocate (or not) additional analysts in every hour depending on the state of system, where the state is roughly the amount of backlog of alerts. The defender aims to keep the backlog of alerts below a threshold provided by ARL. We model the attacker analogously as choosing to send \emph{extra} alerts over and above the stochastic arrival of alerts based on the underlying queuing process, given a fixed total budget of additional alerts. The attacker aims to push the backlog of alerts above the given threshold. Surprisingly, our experiments reveal that the attacker fails to overwhelm the system (defined quantitatively in \S~\ref{sec:experiments}) when its budget of additional alerts is less than the defender's budget of additional inspections, even though the defender policy was learned against a stochastic adversary. 

In order to understand the failure of the above attacker policy, we formulate a zero-sum game version of the CSOC-RL model as our \emph{second contribution}. This CSOC-GAME model is a partially observable stochastic game but where the total reward is the maximum of the reward in every time step, as opposed to the standard discounted sum of rewards. Building on recent results in stochastic games~\cite{gimbert2016values}, we show that the minimax theorem holds for this non-standard game, which reveals that there exists a defender's policy that is robust against \emph{any} attacker policy. The experimentally observed failure of the attacker's best response to the learned defender policy implies that the defender is learning a policy close to its minimax policy (discussed later in \S~\ref{sec:game}). Digging further into queuing theory, we derive simple rule based defender policies that guarantee certain minimum reward for the defender. We find experimentally that these rule based policies, while simple and interpretable, are inferior in performance to the RL learned defender policy.

Our \emph{third contribution} is a successful attacker policy that works by exploiting a modeling assumption in the CSOC-RL model and a defense against this attack. In the CSOC-RL model, operational considerations restricted the number of inspections allocated by the defender to be in discrete chunks of fixed size. Importantly, the CSOC-RL work also assumed that the adversary sends alerts in discrete chunks of exactly the same size as the defender's inspection chunks, which we exploit in our attack. By relaxing the assumption on the attacker, we find an attacker policy that exhausts the defender budget of additional inspections using a small number of additional alerts, which then allows the attacker to overwhelm the system arbitrarily. 
Inspired by the double oracle method from game theory, we retrain the defender using additional episodes from the discovered attacker policy and find that the defender's policy becomes robust to the discovered policy. 
In the next double oracle iteration, no harmful attacker policies were discovered, thus, providing evidence that the relearned defender policy is robust to any adversarial generation policy.

\section{Background, Prior Work and Notation}
\label{sec:background}
In this section, we provide a brief summary of the prior work CSOC-RL as well as a brief background about the queuing process used in that work. The arrival of alerts was modeled as a Poisson process with nominal rate of $\lambda_0 = 1919$ alerts/hour, which was chosen based on past data from the CSOC that was studied in that work. The adversary was modeled as a fixed stochastic adversary that changed the actual alert arrival rate $\lambda$ ($\lambda\geq \lambda_0$) according to an unknown stochastic distribution.  This CSOC-RL work modeled the first level of inspection in a CSOC. The first level of inspection is a fast inspection that decides the severity of the alert and whether a followup second detailed inspection is required. Given the almost same steps for all alerts in this first level of inspection, the inspection time for every alert is same. Based on the arrival nominal rate $\lambda_0$ and the time to service an alert, a nominal number of analysts were chosen so that the aggregate \emph{nominal} service rate of alert was $\mu_0 = 1920$ alerts/hour. However, the actual service rate $\mu$ varied stochastically with an unknown distribution (always $\leq \mu_0$) due to factors such as analyst absenteeism, failure of sensors, etc. We skip the details of analyst scheduling in~\cite{Shah2018}, as that is not required for this exposition; for completeness, this detail is in the appendix.

\textbf{Background on queuing theory}: The above model with fixed rates ($\mu = \mu_0, \lambda = \lambda_0$) is exactly a M/D/1/FCFS queue (this notation is the standard Kendall notation from queuing theory). M stands for Poisson arrival, D for deterministic service time, 1 for the number of servers (here all analyst are clubbed together as one server), and FCFS means that the alerts are inspected on a first come first serve basis. The M/D/1/FCFS queue has been studied a lot and can be viewed as a discrete time Markov chain with infinite state space $\{0,1,\ldots\}$ that represents the queue length. The transition probability of this Markov chain depends on $\lambda_0, \mu_0$.  Let $A_t, Z_t$ be the random variable that denotes the number of arrivals and number of alerts serviced in the $t^{th}$ hour. With fixed nominal rate, $P(Z_t=\mu_0)=1$ (deterministic service) and $P(A_t = n) = \frac{\lambda_0^n \exp(-\lambda_0)}{n!}$ (Poisson distribution). Given queue length $b_{t-1}$ at time $t-1$, the transition probability can be expressed as a function $h$ of $\lambda_0,\mu_0$ as follows $P(b_t~|~b_{t-1}) = P(A_t - Z_t = b_t - b_{t-1}) = h(\lambda_0, \mu_0, b_t - b_{t-1})$. An important aspect of M/D/1/FCFS queue is the ratio $\rho_0 = \frac{\lambda_0}{\mu_0}$. The queue is stable with finite expected queue length only when $\rho_0 < 1$. With the given nominal rates we have $\rho_0 = 0.999479$.

\textbf{MDP of the CSOC-RL problem}: However, note that with varying $\lambda, \mu$ (with unknown distribution) the CSOC-RL model is not exactly a M/D/1/FCFS queue. Yet, it can still be described by a Markov decision process with the unknown transition probability as $P(b_t~|~b_{t-1}) = P(A_t - S_t = b_t - b_{t-1}) = \int h(\lambda,\mu, b_t - b_{t-1}) p(\lambda,\mu) \; d\lambda \; d\mu$, where $p(\lambda,\mu)$ is the joint (unknown) density of the randomly varying $\lambda, \mu$. As the transition probability is unknown, 
the CSOC-RL work modeled the defender's problem as a RL problem. Note that as $\lambda \geq \lambda_0$ and $\mu \leq \mu_0$ the instantaneous $\rho = \frac{\lambda}{\mu}$ can be $> 1$. Thus, the defender is provided with a total of $X=28,800$ additional inspections to be used over $N$ time steps, and the $\lambda, \mu$ are so controlled so that the additional alerts (counted as total  additional alerts due to higher $\lambda$ and fewer inspections due to lower $\mu$) is also not more than $X$. The defender-adversary interaction was modeled over $N = 336$ hours (two weeks) as the staffing changes every two weeks. In every hour, the defender could call up to 10 extra analysts, which translated to at most $E = 2400$ additional inspections per hour. Also, an analyst has to be allocated for a minimum of 15 minutes, which translated to a discrete allocation of additional inspection in chunks of $M = 60$ alerts.
Formally, the RL problem was modeled with 
\begin{itemize}
  \item\emph{State}: $s \in S$ is a tuple $s = \langle b, n, x\rangle$, where $b$ is the backlog of alerts at the end of time interval $t$, $n$ is the number of time intervals remaining, $x$ is the resources remaining for the defender. The initial budget for the defender is $X$. The initial value of $n$ is the time horizon $N$. 
  \item \emph{Action}: The defender action is to allocate $d$ ($d \leq x$) additional inspection resources at the start of the time interval. $d$ is a multiple of $M$ and an integer between $0$ and $E$.
  \item \emph{Transition}: In the next state, $n$ decreases by one, $x$ decreases by $d$, and the next $b'$ is given by $P(b'~|~s,d) = P(A_t - Z_t = d + b'-b) = \int h(\lambda,\mu, d+ b' - b) p(\lambda,\mu) \; d\lambda \; d\mu$, where $A_t$, $Z_t$ are defined above.
  \item \emph{Rewards}: The immediate reward has two terms. The first is due to the cost incurred by the defender from the backlog after allocating additional inspections given by $C(s,d) = f(b - d)$, where function $f$ normalizes the cost to lie between $[0,1]$ with the value increasing (not strictly) with its argument. The second term is $q(x,n)$ which incentivizes preserving additional resources per time remaining, i.e., $q$ increases with increasing ratio $x/n$ and normalized to lie in $[0,1]$. Thus, the immediate reward is $- f(b - d) + q(x,n)$. 
\end{itemize}
Details of the RL training, the function $q$, and the size of the problem are provided in the CSOC-RL paper.

\textbf{Measuring performance}: While the RL model described above includes reward for preserving budget, the effective reward for the defender is only from backlog. The purpose of the term $q$ for preserving budget was to converge quickly to learn to not exhaust all budget. Thus, as presented in the CSOC-RL work, we also show rewards only for the backlog term.  In more details, the function $f(x)$ is a piecewise linear function defined as
$$
f(x) = 
\begin{cases}
    0 & \mbox{for } x \leq 1175 \\
    \frac{x - 1175}{4350 - 1175} & \mbox{for } 1175 < x < 4350 \\
    1 & \mbox{for } 4350 \leq x
\end{cases}
$$
The 1175 (4350) alerts corresponds to one hour (four hours) worth of average wait time to inspect an alert (AvgTTA) for the CSOC studied in the CSOC-RL paper. The results in the CSOC-RL paper show the backlog in terms of AvgTTA, which we elaborate on further in the experiments section. From our conversation with ARL, the defender's aim is to keep the maximum AvgTTA over the time horizon $N$ as low as possible, with anything over 4 hours being unacceptable.

\section{Adversarial Evaluation Methodology}
In this section, we present our approach to the red team evaluation. The approach has three distinct parts that are presented in the sub-sections.
\subsection{Adversarial RL}
\label{sec:advRL}
Recall that the attacker in the CSOC-RL work was a fixed stochastic agent that changed the actual alert arrival rate $\lambda$ ($\lambda\geq \lambda_0$) according to an unknown stochastic distribution. We make the attacker truly adversarial by allowing him to send his choice of a number of alerts in every time step subject to a total budget constraint $Y$. The attacker's optimal attack problem can be set-up as a RL problem itself. The MDP of the adversarial RL problem is described as:
\begin{itemize}
  \item\emph{State space}: $s \in S$ is a tuple $s = \langle b, n, x, y\rangle$, where $b,n,x$ are same as for the defender MDP. $y$ is the additional alerts remaining for the attacker. The initial budget for the attacker is $Y$.
  \item \emph{Action space}: The adversary action is to send $a$ ($a \leq y$) additional alerts at the start of the time interval.  $a$ is a multiple of $M$ and an integer between $0$ and $Y$.
  \item \emph{Transition}: In the next state, $n$ decreases by one, $x$ decreases by $d$ ($d$ given by the fixed and known defender policy), $y$ decreases by $a$, and the next $b'$ is given by $P(b'~|~s,a) = P(A_t - Z_t - d +a = b'-b) = \int h(\lambda_0,\mu, b' - b + d - a) p(\mu) \; d\mu$.
  \item \emph{Rewards}: The immediate reward has two terms. The first is exactly the cost incurred by the defender $C(s,d) = f(b - d)$. The second term is $q(y,n)$ which incentivizes preserving additional alerts per time remaining, i.e., $q$ increases with increasing ratio $y/n$ and normalized to lie in $[0,1]$. Thus, the immediate reward is $f(b - d) +  q(y,n)$. 
\end{itemize}

Few points to note about the adversary model are: (1) the adversary is very powerful as it has complete knowledge of the backlog $b$ and the defender state $x$, (2) the arrival rate $\lambda$ is fixed to $\lambda_0$, since the additional alerts are all controlled by the attacker but the $\mu$ still varies randomly, (3) the adversary has no hard bound on the number of additional alerts per hour (like $E$ for defender), but, the $q$ function acts as a soft bound for the number of additional alerts/hour, and (4) the adversary model assumed here allows attacks \emph{only} by sending additional alerts.

As we show later in experiments, the defender policy learned from the prior RL approach is robust to this attack when $Y \leq X$. This is surprising as learning methods, including RL, have been shown to be vulnerable to attacks. In order to understand and explain this robustness, we analyze the defender-adversary interaction in a game theory model in the next sub-section.
\subsection{Game Theoretic Model}
\label{sec:game}

We start by presenting a unified model of a two player zero-sum repeated games formulated in a recent book by~\cite{mertens_sorin_zamir_2015}. Our CSOC-GAME will be presented as an instance of such games. A zero-sum unified repeated game with signals $(S, I, J,C,D,\pi, q, g)$
is defined by a set of states $S$, two finite sets of actions $I$ (for player 1) and $J$ (for player 2), two
sets of signals $C$ (for player 1) and $D$ (for player 2), an initial distribution $\pi \in \Delta (S \times C \times D)$, and a
transition function $q$ from $S \times I \times J$ to $ \Delta(S \times C \times D)$, where $\Delta(\mathcal{S})$ denotes the set of probability distributions on given set $\mathcal{S}$. The reward function for player 1 is given by function
$g: X \times I \times J \rightarrow [0, 1]$. The reward for player 2 is $-g$. At each stage $t$ players choose actions $i_t$ and
$j_t$ and a triple $(s_{t+1}, c_{t+1}, d_{t+1})$ is drawn according to $q(s_t, i_t, j_t)$ (for $t=1$ the draw is according to $\pi$), where
$s_t$ is the current state, inducing the signals $c_{t+1}, d_{t+1}$ received by the players and the
state $s_{t+1}$ at the next stage. Player 1's history at stage $t$ is $c_1, i_1, \ldots, c_{t-1},i_{t-1}, c_t$ and similarly player 2's history is $d_1, j_1, \ldots, d_{t-1},j_{t-1}, d_t$. The history of the game is $c_1,d_1,i_1,j_1, \ldots, c_{t-1},d_{t-1},i_{t-1},j_{t-1}, c_t,d_t$. Given perfect recall, a behavioral strategy for player 1 is a sequence
$\sigma = (\sigma_t)_{t\geq1}$, where $\sigma_n$, the strategy at stage $t$, is a mapping from possible histories
to $\Delta(I)$, with the interpretation that $\sigma_t(h^1)$ is the mixed action used by player 1 after its history $h^1$. Similarly, a behavioral strategy for player 2 is a sequence
$\tau = (\tau_t)_{t\geq1}$. This game model is very general and a suitable choice of signal space can model repeated and stochastic games with perfect or imperfect information. Given the above model, an evaluation function maps infinite game histories to total reward. In typical repeated game, this evaluation function is a discounted sum of the per stage rewards given by $g$.

The defender adversary interaction is a game on top of an underlying stochastic process of arrival and processing of cyber-alerts, with varying rates of arrival $\lambda$ and service $\mu$ (stated in \S~\ref{sec:background}). 
In order to model the defender attacker interaction as a unified repeated game model, we remove two heuristic choices made in the RL models. We first remove the term $q$ in the defender reward as that term was used only for faster convergence. Next, we remove term $q$ in the adversary reward, and instead place a hard bound $E$ on the number of additional alerts per hour.
The game model is as follows:
\begin{itemize}
\item Player 1 is attacker and player 2 is defender. 
\item \emph{State space}: Each state $s \in S$ is a tuple $s = \langle b, n, x, y \rangle$ with the exact same specification as the MDP of the adversary RL.
\item \emph{Action spaces} $I$ and $J$: The adversary action is to send $a$ additional alerts over at the end of the time interval. The defender action is to allocate $d$ additional inspection resources at the start of the time interval. $a$ and $d$ are both multiples of $M$, and both are integers between $0$ and $E$.\footnote{The game model is well-defined when the action space does not change over time. The budget constraint on action is indirectly imposed in the state transition using the $\min$ functions.}
\item \emph{State transition}: Recall that $A_t$ is the random number of alerts arriving due to the underlying Poisson process with random rate $\lambda$ in the time interval $t$. Similarly, $Z_t$ is the number of alerts processed at random service rate $\mu$. Given $s = \langle b, n, x, y \rangle$ with $n > 0$ and actions $d$ and $a$ the resultant state $s' = \langle b', n', x', y' \rangle$ satisfies $n' = n - 1$ if $n > 0$, $x' = \max(0,x - d)$, $y' = \max(0,y - a)$, and $b' = b - \min(d,x) + \min(a,y) + A_t - Z_t$. Let $E$ denote $b' - b + \min(d,x) - \min(a,y)$. It can be seen that $P(s'~|~s,d,a) = P(b'~|~s,d,a) = P(A_t - Z_t = E) = \int h(\lambda_0,\mu, E) p(\mu) \; d\mu$. States with $n = 0$ are sink states.
\item \emph{Time horizon}: While the game is of finite horizon $N$, we model it as infinite horizon by fixing the rewards for both players in any state with $n = 0$ to $0$ (see the next item).
\item \emph{Rewards}: The cost incurred by the defender is the backlog after allocating additional inspections, i.e., $C(s,d) = f(b - d)$. The immediate reward for players is:
\begin{itemize}
\item Attacker: $g(s, d, a) = C(s,d)$ when $n > 0$, else $0$
\item Defender: $-g(s, d, a) = -C(s,d)$ when $n > 0$, else $0$
\end{itemize}
 The game is clearly zero-sum.
\item \emph{Signal space} $C$ and $D$: Both players observe separate signals after their actions. The attacker has complete observation, thus, the attacker's signal after the current time step is the next state $s'$ and $d,a$. The defender does not observe the attacker's action $a$ and the $y$ part of the state. However, the defender receives signals for these, which is the backlog $b'$. Thus, the signal for the defender is $b', n', x', d$. The signal probability for the defender is defined exactly like the state transition probability $P(b'~|~s,d,a)$. 
\item \emph{History of signals}: At any time point $T$ the history observed by the adversary is a history $s_t,d_t,a_t$ for all $t \leq T$. For the defender, the observed history is $b_t,n_t,x_t,d_t$ for all $t \leq T$. A play of the game till time $T$ is defined as $s_t,d_t,a_t$ for all $t \leq T$ (this is same as observed history of attacker as adversary observes all the past). All infinite plays of the game are denoted by the set $H_{\infty}$.
\item \emph{Strategy}: As defined earlier, strategies are functions of observed player histories to mixed actions. For attacker, strategy $\sigma$ is a function from all past states and both players action. For defender, strategy $\tau$ is a function of all past $b, n, x$ and $d$. 
\item \emph{Evaluation function}: In contrast to standard stochastic game, the defender (ARL here) only cares about the maximum length of queue of alerts over the time intervals. Thus, the long term reward for the defender is the inf of the rewards over all time intervals (note that defender stage reward is negation of cost, hence, inf captures worst backlog over time). For the attacker, due to the antagonistic nature of interaction, the long term reward is then $sup(h) = \sup_{t\geq 1} g(s_t,d_t,a_t)$ for $h \in H_{\infty}$
\end{itemize}

Fortunately, even for such a complex game a minimax theorem holds for sup evaluation~\cite{gimbert2016values} (which is not true for many other evaluation functions), thus, 
$$
\sup_{\sigma} \inf_{\tau} E_{\sigma,\tau}(sup(h)) = \inf_{\tau} \sup_{\sigma} E_{\sigma,\tau}(sup(h))
 = V$$
 The above result can also be interpreted from the defender's perspective that there is a strategy (policy) $\tau^*$ that can achieve value $-V$ irrespective of the strategy used by the adversary. However, the magnitude of the value $V$ is important (e.g., $V = 1$ is not very good for defender). Next, we show that the budgets of the players decide what $V$ will be achieved for the special case of fixed service rate. Fixed service rate is a mild assumption as analyst absentism is rare and accounted for by a buffer of additional analyst time (see the analyst scheduling details in appendix).
 
\begin{theorem} \label{thmonly}
Given fixed $\mu= \mu_0$, the defender can guarantee himself a long term reward of 
(a) at most $-0.95$, if $Y-X > 4800$, and 
(b) at least $(1 + f(B) )(1 - 1/B)^{N\mu/B} - 1$, if $Y \leq X$. Further, the guarantee of case (b) is provided by a simple rule based policy: ($S1$) whenever the backlog exceeds $B$ by $x$, allocate $x$ additional inspections. 
\end{theorem}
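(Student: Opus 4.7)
The proof naturally splits into the two parts, each handled by exhibiting an explicit strategy for the relevant player and reasoning about the resulting backlog trajectory. Unrolling the transition $b_{t+1} = b_t + A_t - \mu_0 + a_t - d_t$ (using $Z_t = \mu_0$ under the fixed-service assumption) yields the accounting identity
$$b_t - d_t \;=\; b_0 \;+\; \sum_{s < t}(A_s - \mu_0) \;+\; \sum_{s < t} a_s \;-\; \sum_{s \leq t} d_s,$$
which isolates the two players' budgets $X, Y$ from the Poisson drift and directly relates the per-step cost $f(b_t - d_t)$ to cumulative actions.

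For part (a), I will exhibit the attacker strategy ``send $a_s = E$ at every step until the budget $Y$ runs out.'' Since $Y > X + 2E = X + 4800$, this remains feasible for at least $t^{*} = X/E + 2 = 14$ steps. Substituting $t = t^{*}$ into the identity and using the defender's budget constraint $\sum_{s \leq t^{*}} d_s \leq X$ gives $b_{t^{*}} - d_{t^{*}} \geq b_0 - 14 + 14 E - X = b_0 + 4786$ in expectation, with Poisson fluctuations on the order of $\sqrt{14 \lambda_0} \approx 164$. Since $4786$ is comfortably above the saturation threshold $4350$ of $f$, a Gaussian tail estimate forces $f(b_{t^{*}} - d_{t^{*}}) \approx 1$ with probability near one, so $E[\sup_t f(b_t - d_t)] \geq 0.95$ regardless of the defender's policy, which gives the stated upper bound $-0.95$ on the defender's value. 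The one subtlety is the simultaneous-move structure (the defender does not see $a_t$ before choosing $d_t$), but since the bound is driven only by the cumulative budget gap, it does not depend on how the defender schedules its $X$ across time.

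For part (b), policy $S1$ plays $d_t = \max(0, b_t - B)$ whenever the remaining budget allows, so $b_t - d_t = \min(b_t, B) \leq B$ and the per-step cost is at most $f(B)$. Let $\mathcal{E}$ denote the event that the defender never exhausts its $X$ budget over the $N$ hours. On $\mathcal{E}$ the reward is at least $-f(B)$ and on $\mathcal{E}^c$ the reward is at least $-1$, so conditioning gives a lower bound on $E[\text{reward}]$ that is linear in $P(\mathcal{E})$ and $f(B)$, which after substituting a lower bound on $P(\mathcal{E})$ rearranges to the claimed expression. The heart of the proof, and the main obstacle, is to show $P(\mathcal{E}) \geq (1-1/B)^{N\mu_0/B}$ against an adaptive adversary. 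My plan is to use the flow identity $\sum_t d_t \leq \sum_t a_t + D_N^{+}$, where $D_N^{+}$ is the positive part of the cumulative Poisson drift, together with $\sum_t a_t \leq Y \leq X$, to reduce the exhaustion event to an excursion event for the underlying M/D/1 queue, with adversarial injections absorbed into the budget slack $X - Y \geq 0$. I will then partition the $N\mu_0$ service opportunities into $N\mu_0/B$ blocks of length $B$ and invoke the reflection-principle bound for critically-loaded random walks: each block fails to cross $B$ with probability at least $1 - 1/B$, and multiplicatively stitching these block-survival events yields the product form. The delicate step is arguing that the worst-case adaptive adversary cannot do better than the no-attack baseline here, which follows because its total displacement is bounded by $Y \leq X$ and is therefore absorbed by the same budget that governs the exhaustion event.
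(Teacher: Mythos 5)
Your part (a) is essentially the paper's own argument: flood with $E=2400$ alerts per hour for 14 hours, use concentration of the Poisson arrivals over the window, and observe that the budget gap $14E - X = 4800$ minus the $O(\sqrt{14\lambda_0})$ fluctuation still clears the saturation point $4350$ of $f$, forcing $\sup_t f(b_t-d_t)=1$ with probability large enough to give $-0.95$. The outer structure of your part (b) -- condition on a good event $\mathcal{E}$, get $-f(B)$ on $\mathcal{E}$ and $-1$ off it, and lower-bound $P(\mathcal{E})$ by $(1-1/B)^{N\mu/B}$ -- also matches the paper. The gaps are in the two steps you yourself flag as the heart of the matter.

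First, the flow identity $\sum_t d_t \le \sum_t a_t + D_N^{+}$ is false for $S1$. Under $S1$ the defender's total expenditure is a sum of \emph{local} exceedances above $B$, which is controlled by $\sum_t\bigl(A_t-\mu_0+a_t\bigr)^{+}$, not by $\sum_t a_t + \bigl(\sum_t(A_t-\mu_0)\bigr)^{+}$; the sum of positive parts can exceed the positive part of the sum by an amount of order $N\sqrt{\lambda_0}$, which is several thousand alerts and cannot be ``absorbed into the budget slack'' since the theorem allows $Y=X$ (slack zero). The paper instead proves a pathwise lemma: condition on the event that the \emph{unattacked} M/D/1 trajectory $x_1,\dots,x_N$ never exceeds $B$; then an injection of $k_i$ alerts raises the subsequent trajectory to $x_t+k_i$, so the defender's response satisfies $d_i \le \max_t(x_t+k_i)-B \le k_i$, whence $\sum_i d_i \le \sum_i k_i \le Y \le X$ and the budget never runs out. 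Without the positive drift being neutralized by this conditioning, your reduction does not go through. Second, your bound on the excursion probability partitions time into fixed blocks of $B$ service opportunities; but those blocks are neither independent nor do they restart at queue length $0$, so neither the per-block estimate $1-1/B$ nor the multiplicative stitching is justified (conditioning on surviving earlier blocks can leave the queue just below $B$ at a block boundary). The paper gets both properties for free by decomposing into \emph{busy cycles} of the regenerative M/D/1 process: busy cycles are i.i.d., each starts at $0$, only cycles of duration at least $B/\mu$ can reach level $B$ (so there are at most $N\mu/B$ relevant ones), and the known bound $P_j \ge 1-1/j$ on the busy-period maximum supplies the $1-1/B$ factor. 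You would need to replace your fixed-block/reflection step with this regenerative decomposition (or an equivalent) for the product form to be valid.
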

\begin{proof}[Proof Sketch]
If $Y - X > 4800$, then the attacker will dump $E$ additional alerts for 14 hours at the start. W.h.p. the number of normal alerts arriving in 14 hours is also high, thus, the queue will build up to 4350 w.h.p. The case $Y \leq X$ involves concepts from queuing theory. Roughly, it is shown that by its own the backlog for the M/D/1 queue will not exceed $B$ with high probability. Next, it is shown that for any trace in which the backlog due to the M/D/1 process does not exceed $B$ and for any attacker policy, the stated simple rule above is enough to ensure that the backlog is within $B$. Thus, w.h.p. the defender cost is bounded to $f(B)$, which provides the required result with low expected reward. For example, $B = 1500$ yields around $-0.3$ utility.
\end{proof}

The analysis above reveals that there exists robust defender polices that are robust to any attacker policy only when the resources (budget) of attacker is lower. Thus, robustness is highly dependent on the resources of the players, that is, the player with a resource advantage wins the game. The observed robustness of the CSOC-RL defender policy in experiments can be explained as the learned defender policy being one of the robust policies (or being close enough). 

\begin{figure*}%
    \centering
    \subfigure[Daily bound worst run]{%
    \label{fig:first}%
    \includegraphics[scale=0.16]{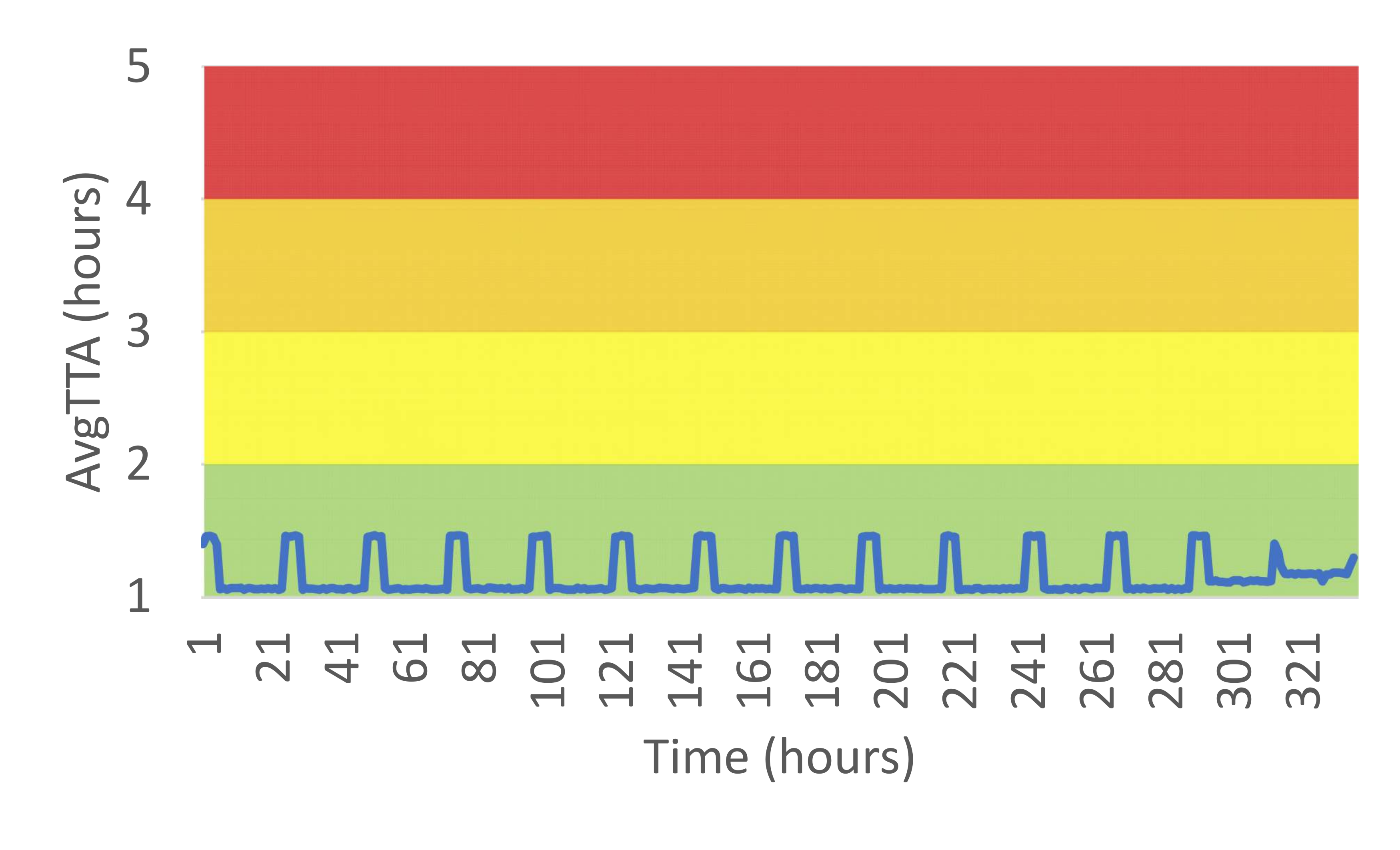}}%
    \quad
    \subfigure[Unbounded worst run]{%
    \label{fig:second}%
    \includegraphics[scale=0.16]{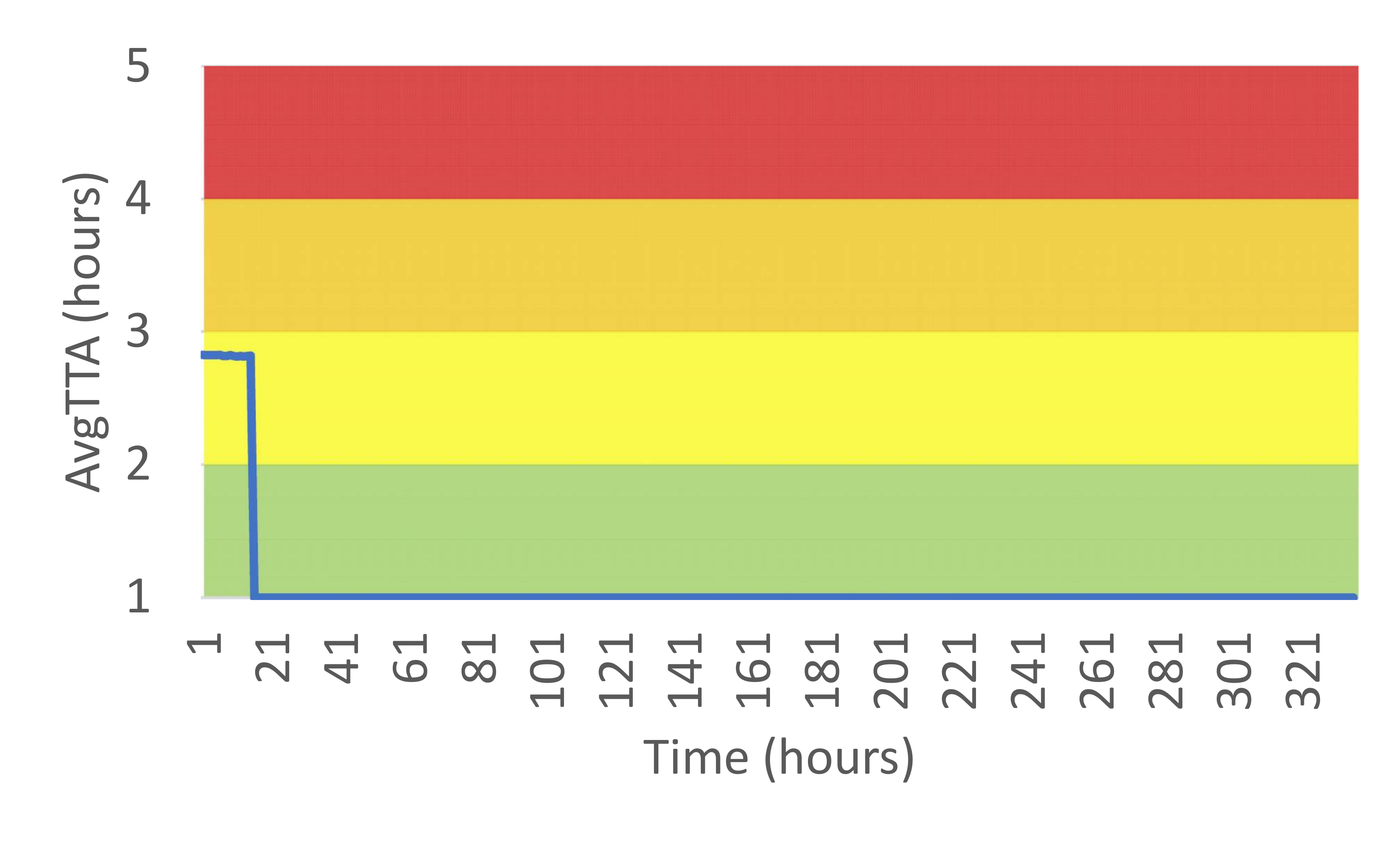}}%
    \quad
    \subfigure[10\% extra budget worst run]{%
    \label{fig:third}%
    \includegraphics[scale=0.16]{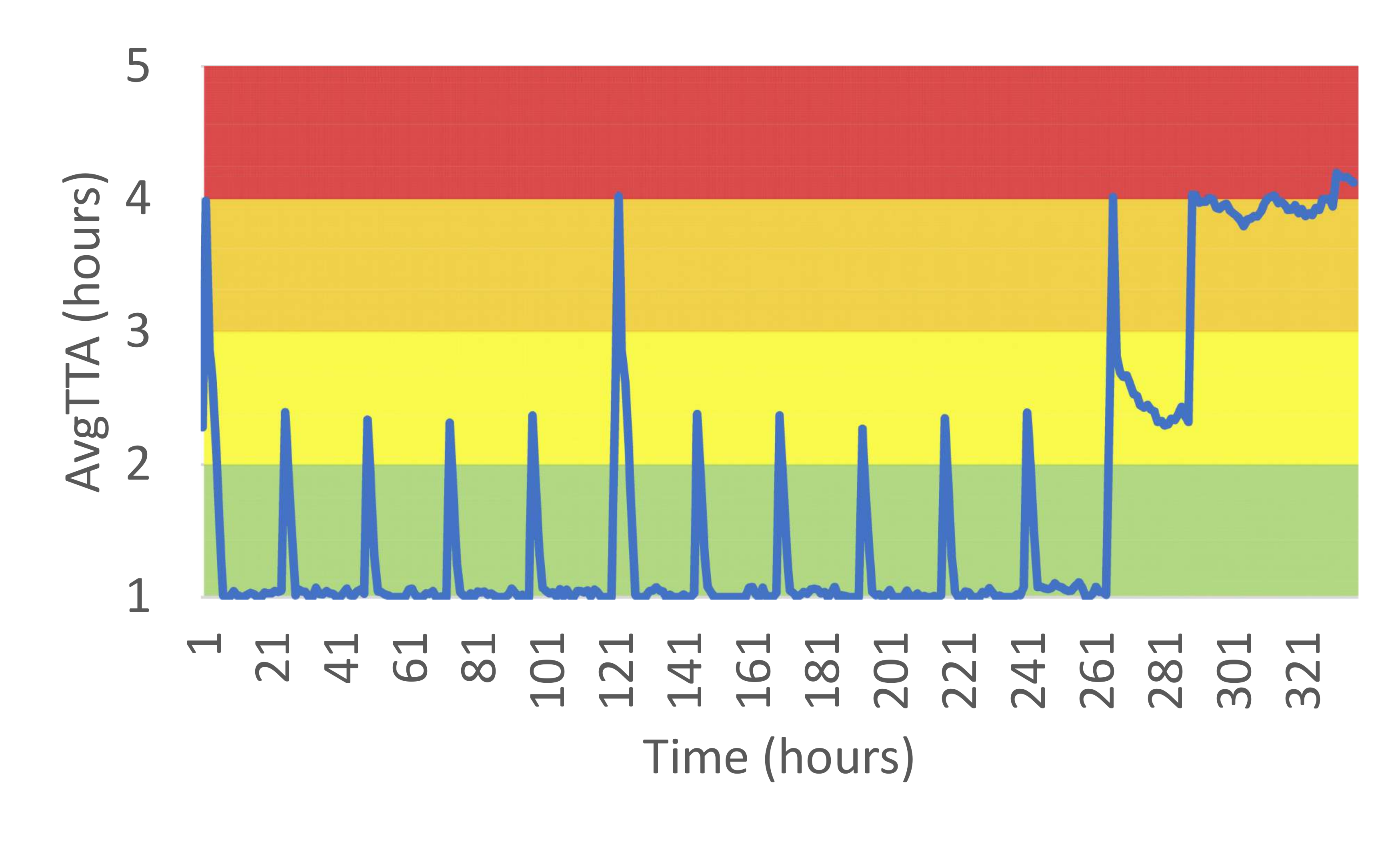}}%
    \qquad
    \subfigure[Daily bound proportions]{%
    \label{fig:fourth}%
    \includegraphics[scale=0.17]{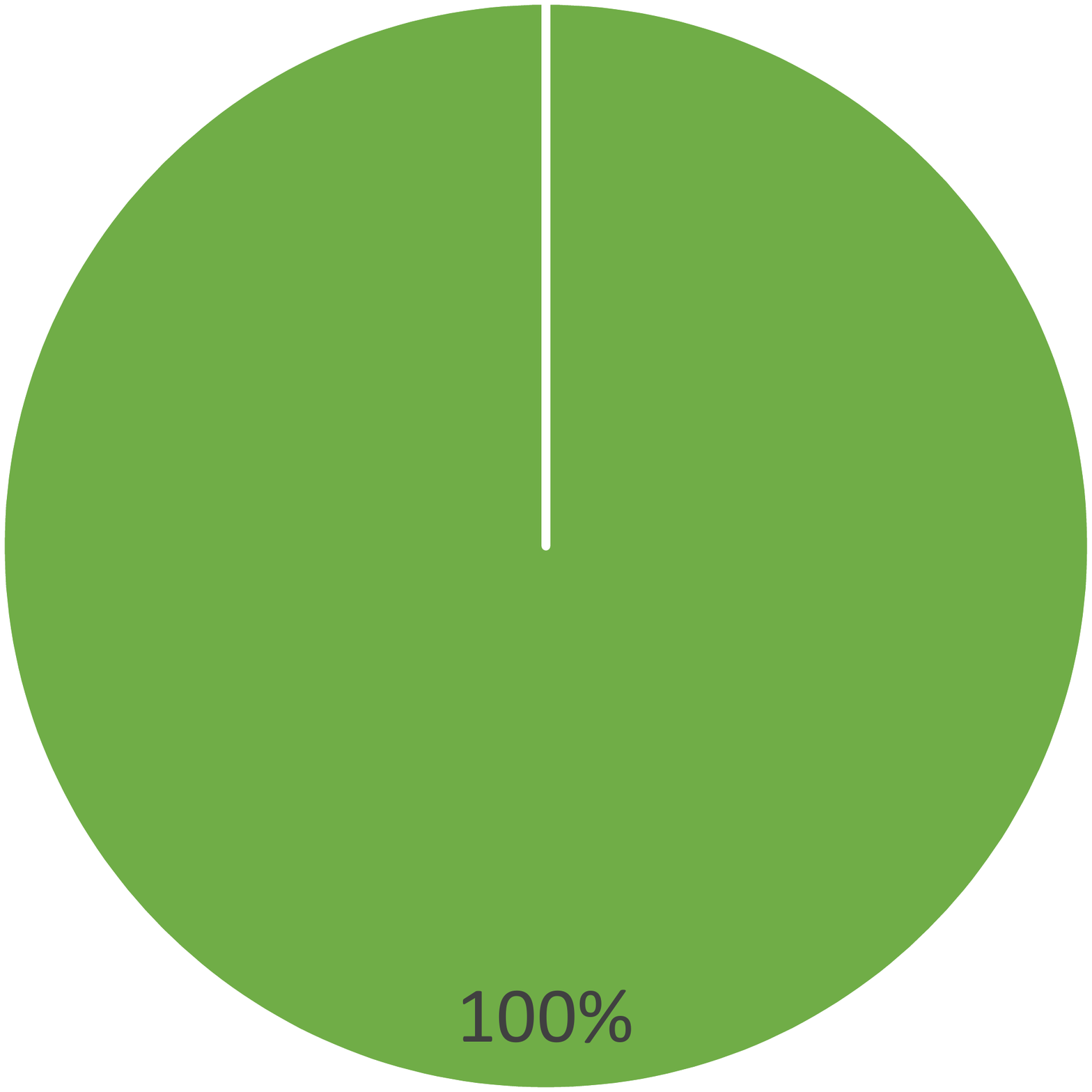}}%
    \qquad\qquad\;
    \subfigure[Unbounded  proportions]{%
    \label{fig:fifth}%
    \includegraphics[scale=0.17]{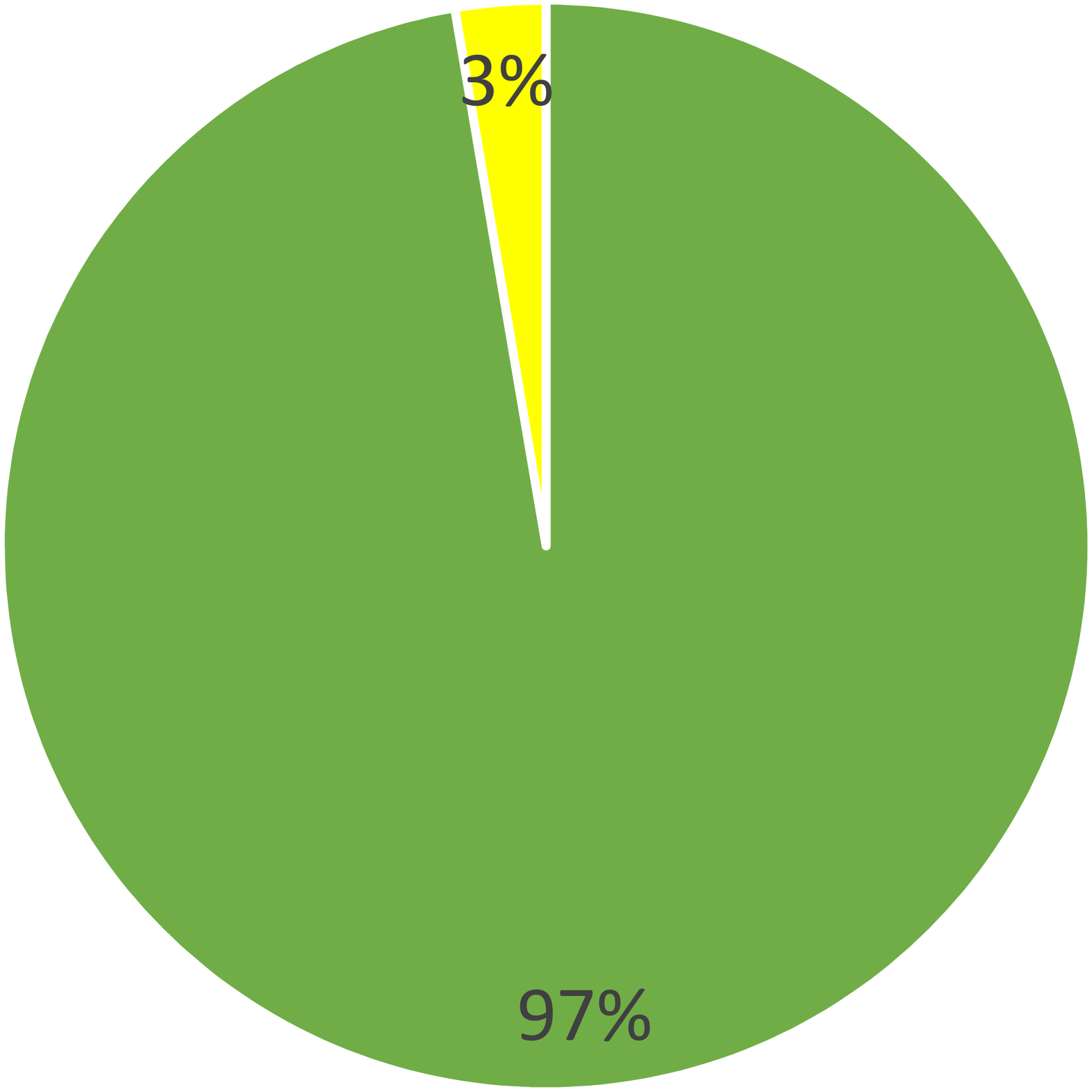}}%
    \qquad\quad\;
    \subfigure[10\% extra budget proportions]{%
    \label{fig:sixth}%
    \includegraphics[scale=0.17]{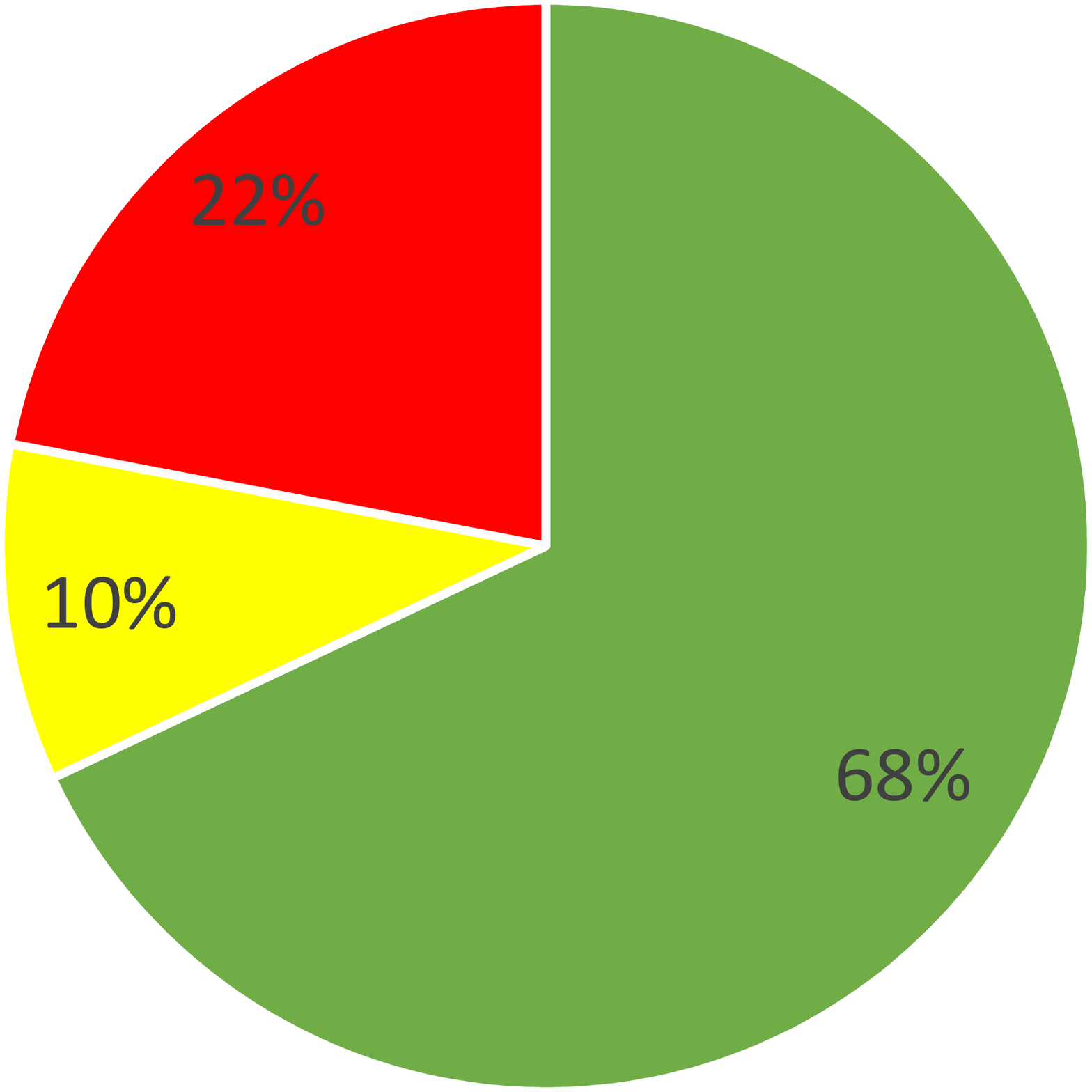}}%
\caption{Results for Adversarial RL attacks} \label{fig:1}
\end{figure*}

Further, the theorem provides a simple rule based baseline policy $S1$. We further propose a more aggressive rule based policy: ($S2$) whenever the backlog exceeds $B$ by $x$, allocate $x + (B-1175)$ additional inspections. $S2$ is more aggressive as it brings the backlog down to 1175 whenever it exceeds $B$. We compare the RL approach against both $S1$ and $S2$ in our experiments.

\subsection{Attacking Modeling Assumptions}
The last two sub-sections followed the assumption from the CSOC-RL work that the attacker generates alerts with the same discretization in the action-space as the defender, which was fixed to multiples of $M=60$ alerts.
This models a practical constraint for the defender that an additional analyst must be allocated for a minimum amount of time once engaged.
However, this constraint in the action-space is too restrictive for the attacker. Thus, we relax this constraint for the attacker allowing him to have a finer discretization by sending alerts in multiples of 30. We attack using the same adversarial RL set-up of Section~\ref{sec:advRL} with this new chunk size of 30 for adversary. We show in the experiments that this attack succeeds in building up a large backlog (high AvgTTA). 
The result is explained by the observation that the attacker uses few additional alerts to elicit a disproportionate response of additional resources from the defender, resulting in the defender resources getting exhausted while the attacker still has additional alerts left. Then, the attacker uses the leftover additional alerts to push the backlog to a high value.

The fix for the above attack is inferred from the game theoretic nature of the problem. In particular, this new attack policy is an adversary best response that was not considered when learning the defender RL policy. Thus, using a \emph{double oracle} like approach we retrain the defender by including episodes from this new attacker policy in the defender RL training. We find the new relearned defender policy is robust to the discovered attacker policy, and no new harmful attacker policies were discovered when the adversarial RL approach was used to attack the relearned defender policy. 

\section{Experiments}
\label{sec:experiments}
We explain the experimental setup including our measurement metrics. First, the experimental setup is to simulate the arrival and processing of alerts. The simulation model we use is exactly same as in the previous CSOC-RL work. At a high level, the simulator simulates the underlying queuing process as well as uncertain events such as analysts absentism, etc. Other significant realistic aspects simulated are that additional analysts time is obtained by first making regular analysts work overtime and then bringing in additional analysts, if required. As in the prevous work, we fix the defender budget as 28,800 additional inspections over the $N=336$ hours. For completeness, further details of the experimental set-up is in the appendix.

In the same CSOC-RL work, the authors rely on a metric called average total
time for alert investigation (AvgTTA). The time for alert investigation of an alert is the waiting time in the queue
and the analyst investigation time  after its arrival in the CSOC database.
The AvgTTA/hr is estimated as the average of the time for alert investigation values of all the alerts
that were investigated in an hour. It is a requirement of the CSOC under study that the AvgTTA/hr remain within four hours. One hour or less was determined to be ideal, and anything within 1-4 hours is acceptable. 

\begin{figure*}%
    \centering
    \subfigure[$S1$ policy worst run]{%
    \label{fig:seventh}%
    \includegraphics[scale=0.16]{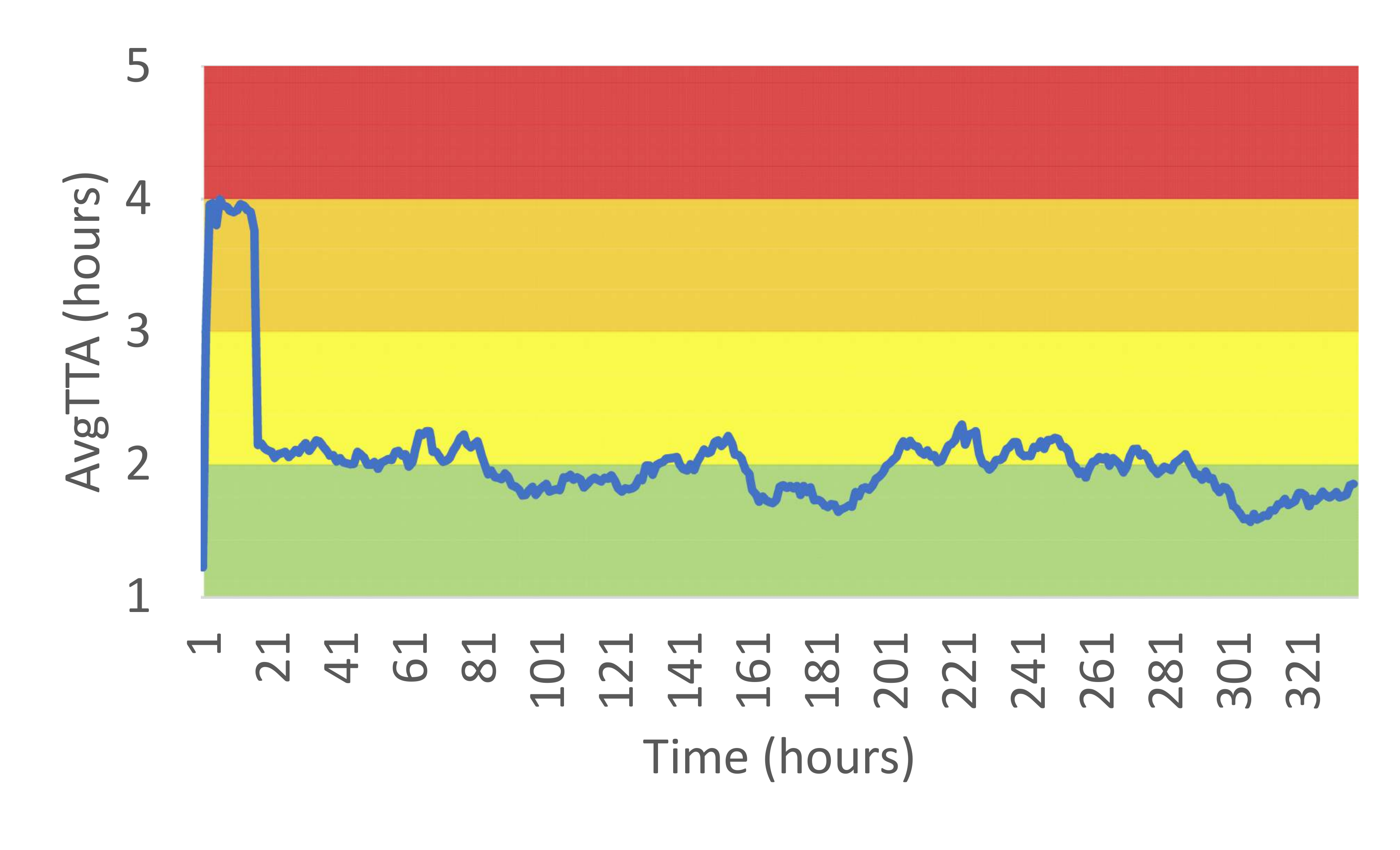}}%
    \quad
    \subfigure[$S2$ policy worst run]{%
    \label{fig:eigth}%
    \includegraphics[scale=0.16]{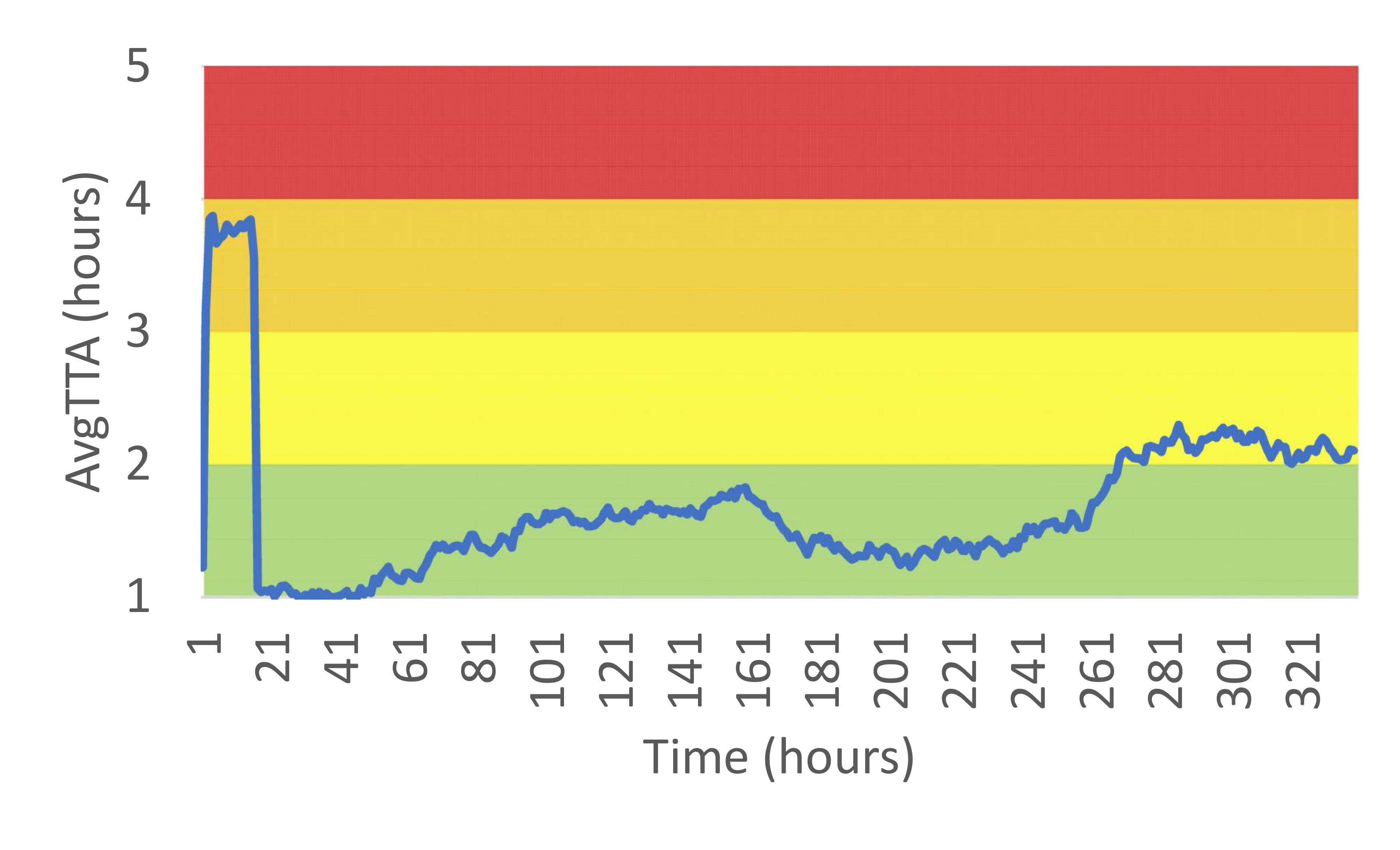}}%
    \quad
    \subfigure[Model assumption attack worst run]{%
    \label{fig:ninth}%
    \includegraphics[scale=0.16]{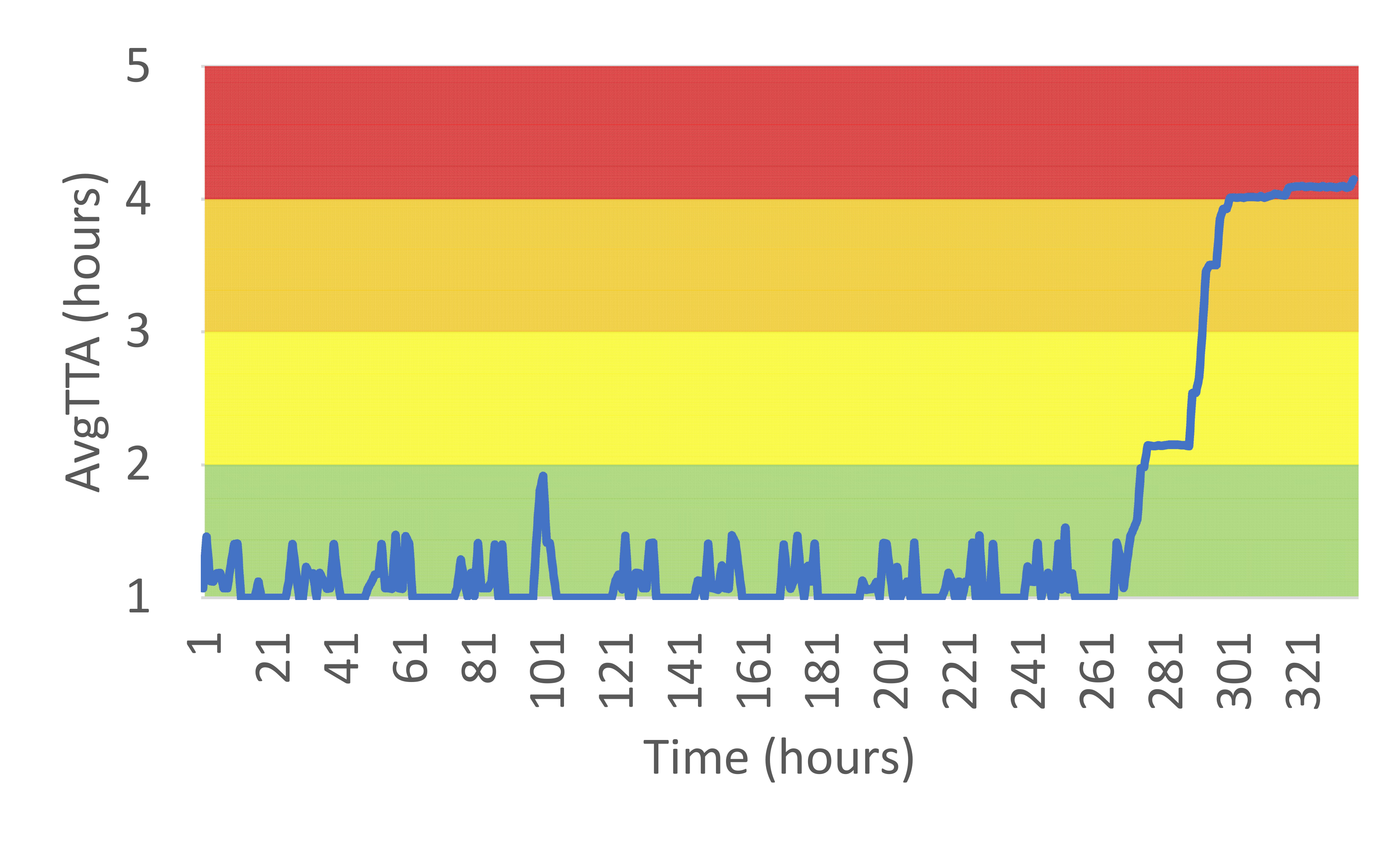}}%
    \qquad
    \subfigure[$S1$ policy proportions]{%
    \label{fig:tenth}%
    \quad\includegraphics[scale=0.17]{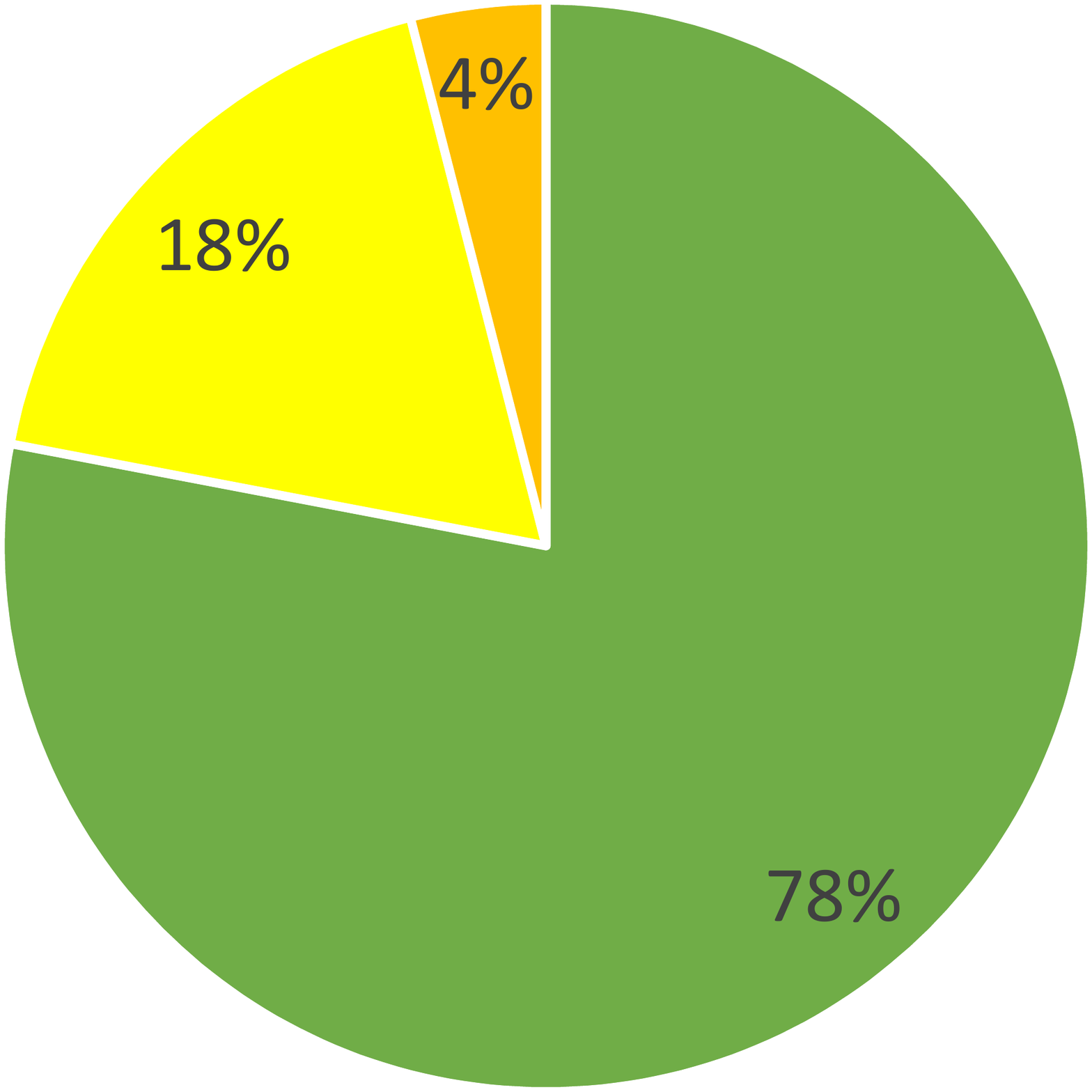}}%
    \qquad\qquad
    \subfigure[$S2$ policy  proportions]{%
    \label{fig:eleventh}%
    \includegraphics[scale=0.17]{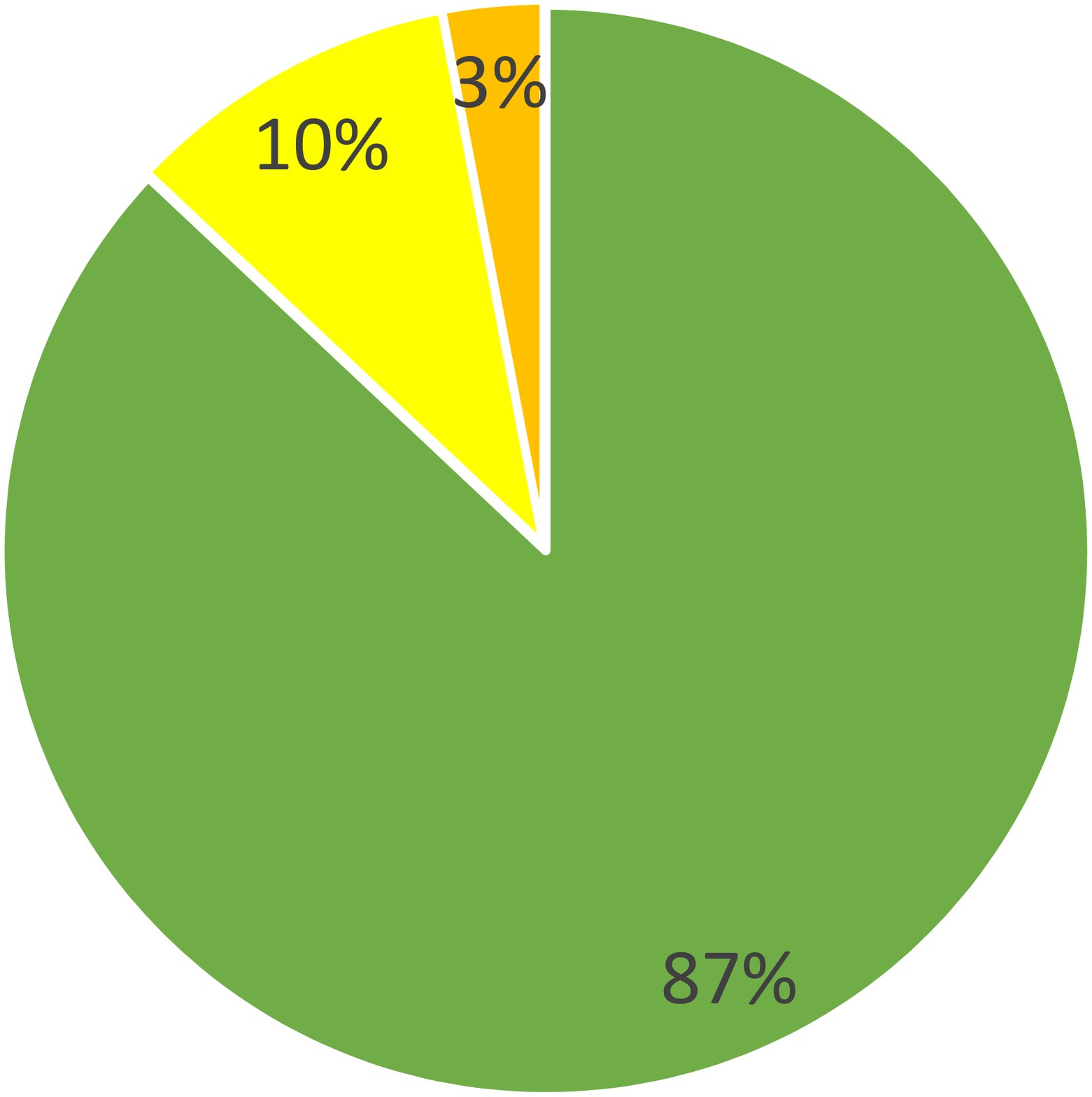}}%
    \qquad\;
    \subfigure[Model assumption attack proportions]{%
    \label{fig:twelth}%
    ~~\includegraphics[scale=0.17]{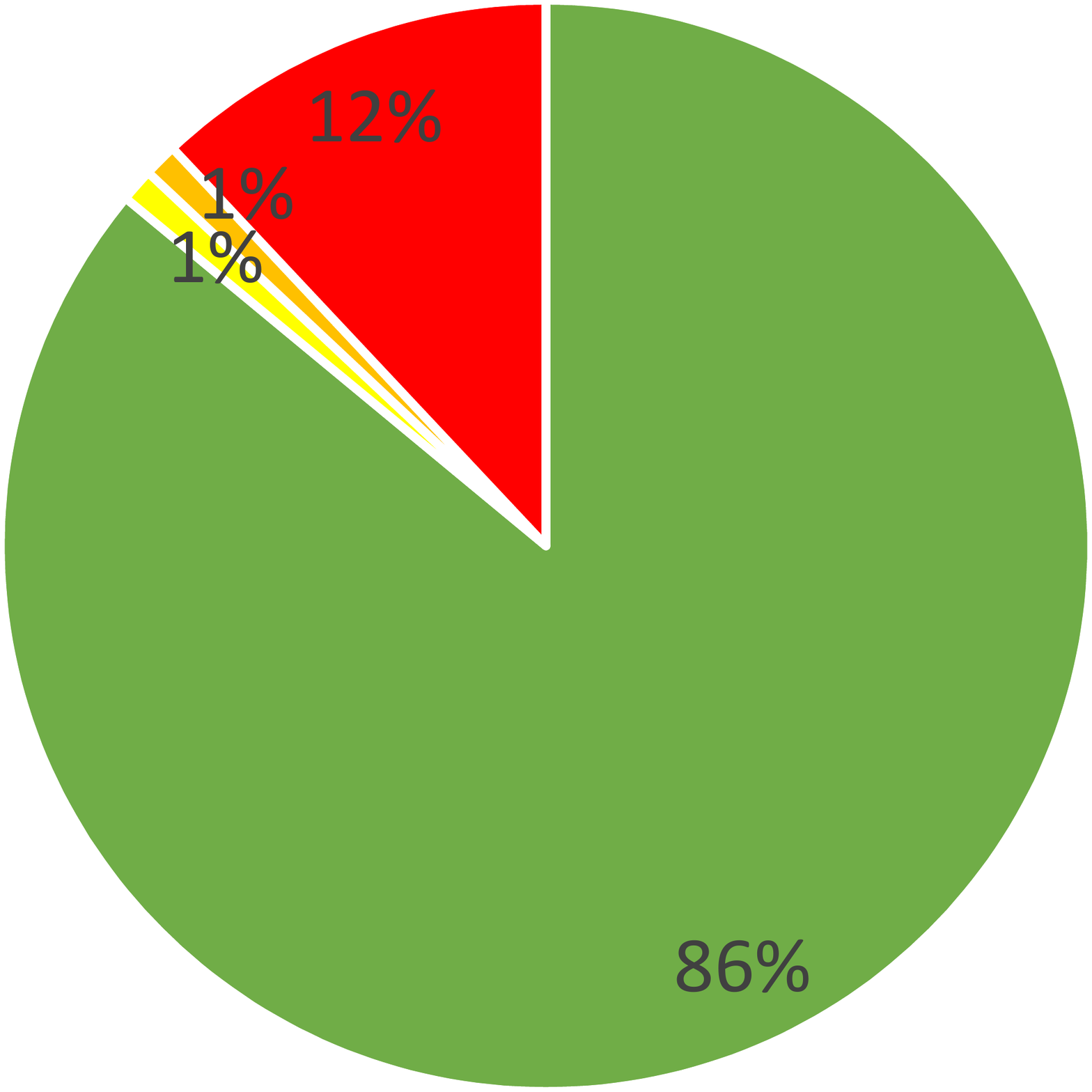}~~}%
\caption{Results for attack on baseline policies $S1$, $S2$ and attack against model assumption}
\end{figure*}

The CSOC-RL application uses a color-coded representation of AvgTTA/hr, which was developed in a prior work~\cite{Shah2018LOE}. Different color-coded tolerance bands are created  below the four hours and above the one hour
value of AvgTTA, for example as
shown in Fig.~\ref{fig:first}. As stated earlier, the AvgTTA value directly corresponds to amount of backlog (1 hour is 1175 alerts, 4 hours is 4350 alerts, and is linearly interpolated). 1-2 hours is green (acceptable), followed by yellow for 2-3 hour (acceptable), orange  for 3-4 hour (acceptable), and red (unacceptable) above 4 hours. Any value below 1 hour is rounded up to 1 hour. Our results show the hour by hour backlog for the worst run (run with maximum backlog ever) among 500 runs using the color-coded representation, for example, the line in Fig.~\ref{fig:first}. 

It is known that interpretability and usable interfaces are among the main issues in the adoption of AI technologies in the real world~\cite{ribeiro2016should,stein2017market}. Towards that end, the color-coded visualization was found to be extremely useful in displaying and explaining the CSOC-RL results to non-mathematical experts~\cite{Shah2018}. While  the visualization is coarser than numerical results expressed using the function $f$, they are much more easily interpretable by humans. 

In this work, we introduce another easily interpretable metric that we call the AvgTTA proportions. Given the stochastic arrivals of alerts due to the queuing process, it is not sufficient to look at a worst case run of the system. Thus, we show a pie-chart with the proportion of hours among the $500N$ hours over 500 distinct runs  corresponding to backlog in one of the four color bands.

\subsection{Adversarial RL}
We attack the learned defender RL using our adversarial RL method stated earlier. We try three different variations on the adversary budget: (1) adversary budget same as defender budget, but constrained by a daily bound $28800/14 \approx 2057 $ of additional alerts, (2) adversary budget same as defender budget with no bounds, and (3) adversary budget 10\% more than defender budget with  daily bounds.

The results for the worst case (maximum overall backlog) run of the system as well the AvgTTA distribution are shown in Fig.~\ref{fig:1}. The adversary with lower budget is unable to push the backlog beyond the green band in any of the 500 runs (\ref{fig:fourth}) with the daily bounds. Fig.~\ref{fig:first} shows that the attacker dumps all additional alerts allowed in a day at the start of the day. Even with no bounds, the attacker with lower budget is unable to push the backlog beyond the yellow band in the worst case (Fig.~\ref{fig:second}), with 97\% of $500N$ hours remaining in green band (Fig.~\ref{fig:fifth}). Again all additional alerts are dumped in a few hours at the start of the time horizon. However, with just an extra 10\% budget even the restricted attacker with a daily bound $1.1*28800/14 \approx 2262$ is able to push the backlog into the red zone in 22\% of the $500N$ hours (Fig.~\ref{fig:sixth}). Fig.~\ref{fig:third} shows that towards the end the backlog builds up and stays high since the defender resources get exhausted (see defender resources with time in appendix).

These results provide an empirical evidence that the learned defender RL policy is robust to the best response of the attacker. Further, the results also reveal the subtle relation between robustness of defender and budgets of players as claimed in Theorem~\ref{thmonly}.


\subsection{Theoretical Baseline}
Our next set of experiments are for the baselines $S1$ and $S2$ that we derived in Section~\ref{sec:game}. We choose the threshold $B$ corresponding to 2 hours. We experiment with the more powerful unbounded attacker (i.e., no daily bound). The 
worst case runs (Fig.~\ref{fig:seventh} and~\ref{fig:eigth}) show that the baselines perform reasonably and are able to recover from the initial barrage of alerts sent by the unbounded attacker.
Yet, the backlog still goes to the orange band, and compared to the proportions for RL policy (Fig.~\ref{fig:fifth}) the proportions are worse for the rule based policy (Fig.~\ref{fig:tenth} and~\ref{fig:eleventh}). However, the rule based policy is much simpler and interpretable than a large RL policy. In literature, there is evidence for the phenomenon that sub-optimal but simple decision aids are more convincing for human users and more likely to be trusted and adopted~\cite{elmalech2015suboptimal}. Thus, while sub-optimal, the rule based policy is still a competitive candidate for deployment.

\subsection{Attacking Modeling Assumptions}
Finally, we show results for our attack that exploited modeling assumptions in the prior CSOC-RL work. Fig.~\ref{fig:ninth} shows that even the restricted attacker with a daily bound is able to successfully push the backlog into the red zone. We observe that the attacker is able to elicit a disproportionate response by defender, which results in the defender additional resources getting exhausted towards the end of the time horizon (see defender resources with time in appendix). Overall, the attacker is able to keep the backlog in the red zone in 12\% of the $500N$ hours (Fig.~\ref{fig:twelth}). The corresponding robust defender policy learned by retraining the defender by including episodes from the attack found has performance as shown in Fig.~\ref{defense}. Intuitively, the relearned defender policy exhibits more patience by allowing backlog to be higher on average (worst run in Fig.~\ref{defense}), yet manages to keep the backlog in green zone over the time horizon for 95\% of the $500N$ hours (piechart in Fig.~\ref{defense}). We also show the same attack against $S1$ and $S2$ in the appendix. We also attacked the relearned defender policy using various different chunk sizes for the adversary (all the way down to one) and found the defender to be robust against all of the chunk sizes less than 30.  

\begin{figure}[t]%
    \centering
    \subfigure{%
    \label{fig:thirteenth}%
    \includegraphics[scale=0.14]{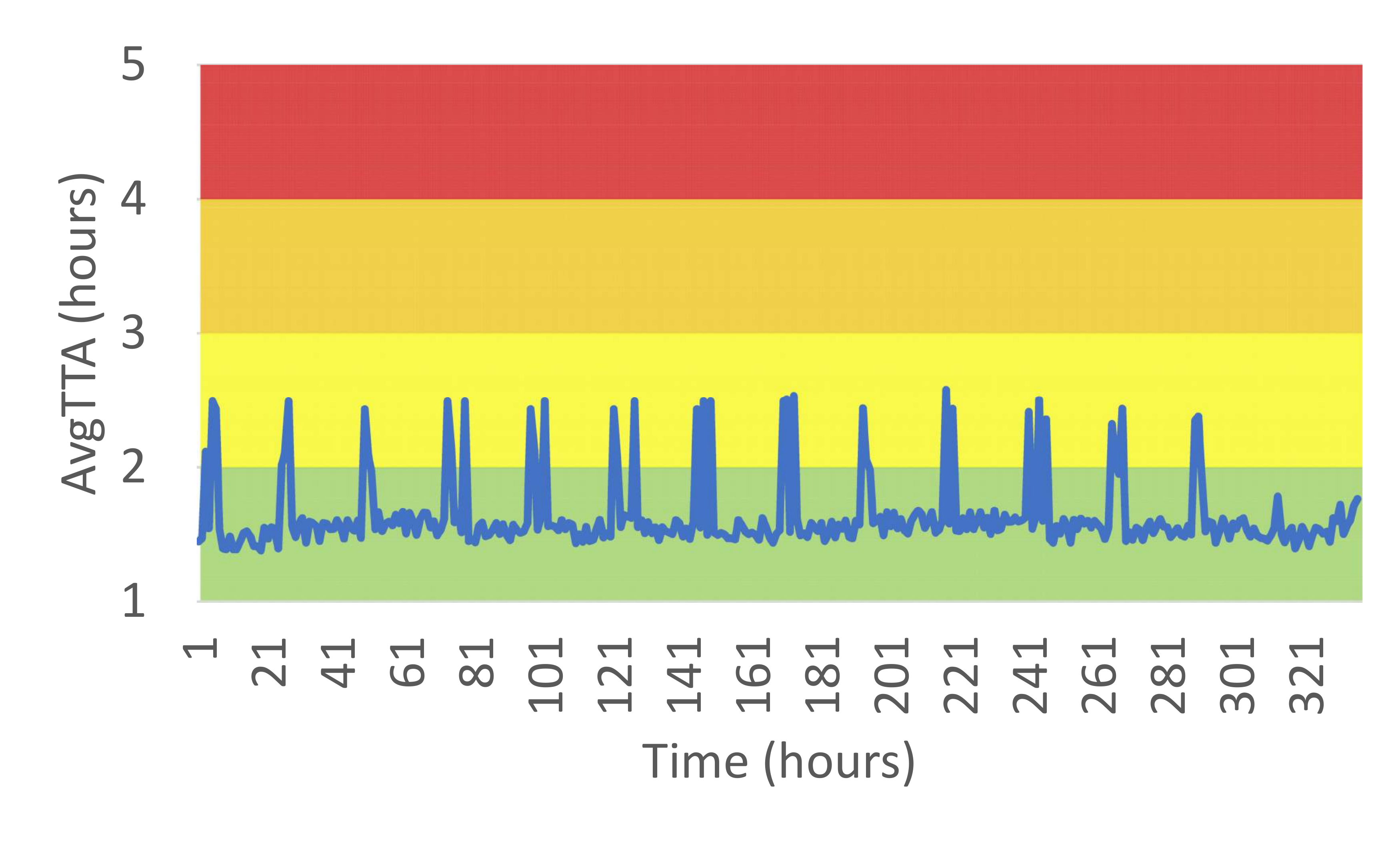}}%
    ~\subfigure{%
    \label{fig:fourteenth}%
    \includegraphics[scale=0.12]{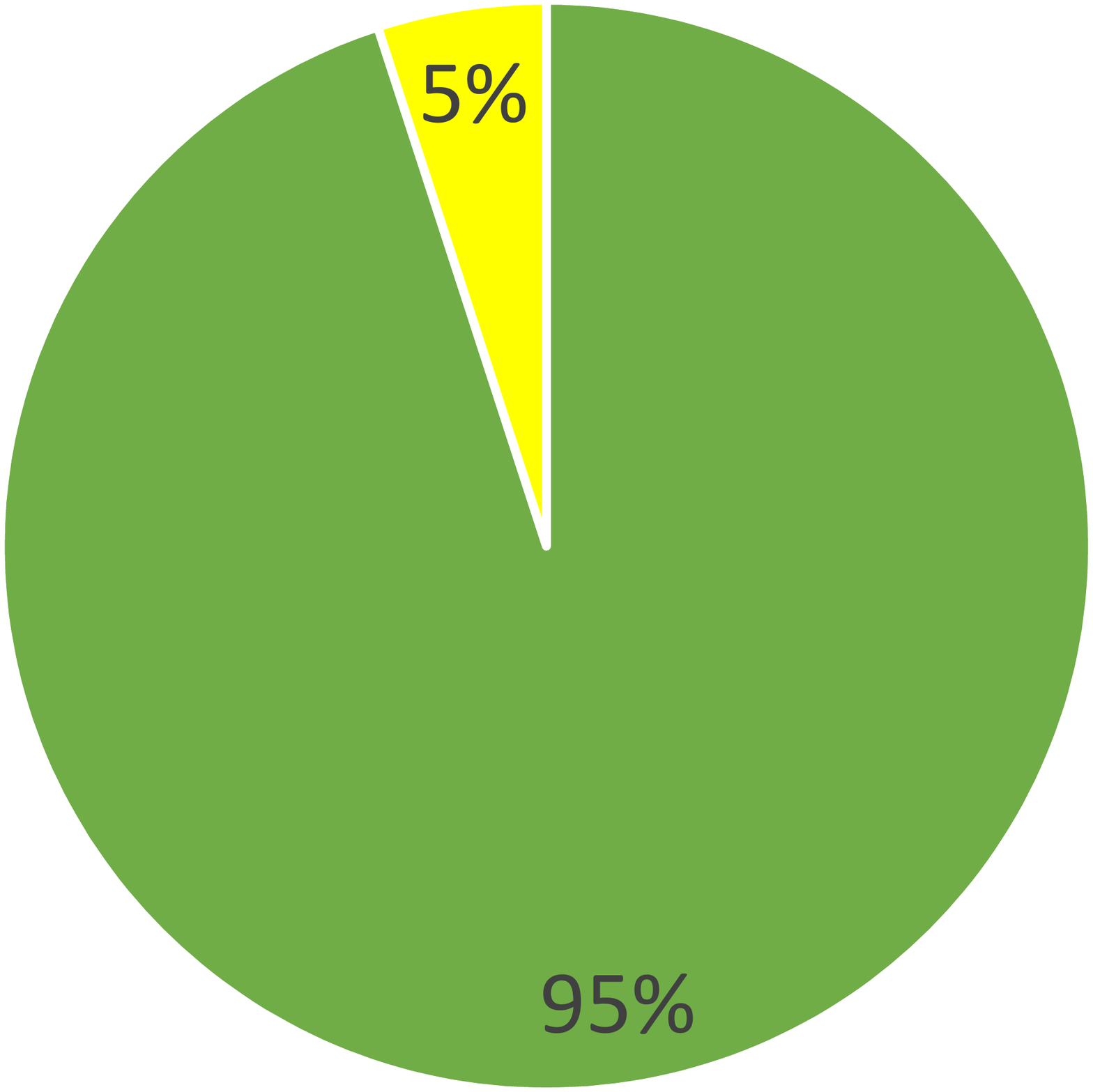}}%
\caption{Defense against attack on model assumptions. Left is worst case run and right is proportions.} \label{defense}
\end{figure}

\section{Related Work}
Game theoretic inspection or auditing has appeared in many papers~\cite{yan2018get,blocki2013audit,brown2016one}, however, all these works have a single shot interaction model and are not focused on cyber-alert inspection. As stated already, one game based work~\cite{schlenker2017don} on cyber-alert inspection is single shot, and has other stringent assumptions such as fixed number of analysts and alerts per hour. Another recent game theoretic work~\cite{guirguis2018} uses a zero-sum Markov game model for cyber-alert inspection, but the model assumes complete information for both the players and hence solves the game using standard minimax value iteration. Our approach accounts for the realistic scenario of incomplete information and signals, and explains theoretically and empirically the subtle relation between the budget of the players. Game theory has also been used for other problems in cyber-security such as deception~\cite{schlenker2018deceiving}, attack graphs~\cite{Durkota:2015:ONS:2832249.2832322}, man in the middle attacks~\cite{li2017defending}, and spear fishing~\cite{zhao2016optimizing,laszka2016multi}, which are quite different from our cyber-security problem. Also, scalability is still an issue in solving partially observable stochastic games~\cite{HBZaaai04}. Further, our game has a non-standard long term utility.

The previous CSOC-RL work~\cite{Shah2018} missed out on the strategic interaction by modeling the problem as a RL problem. Along same lines, recent work on adversarial attacks on learning techniques have found attacks on deep RL systems~\cite{huang2017adversarial,behzadan2017vulnerability,lin2017tactics}. A recent work~\cite{pinto2017robust} looks at the adversarial problem as a zero-sum stochastic game with complete information for both players and proposes a best response dynamics approach to solve the problem, that is, alternatively each player plays its best response fixing the other player's last policy. However, there is no guarantee of best response dynamics converging to an equilibrium, even in zero-sum games (for example, matching pennies). Our problem is harder due to the partial observation of current state. We utilize a different technique where the best response of a player is incorporated as episodes in the RL training of the other player, which is inspired by the double oracle technique in game theory. We also show that robust performance does not mean just reaching the equilibrium but also depends on the value of the equilibrium, which in our case relates to the resources (budget) of players.

Next, queuing processes are well established as a natural model for arrivals and service~\cite{cohen2012regenerative,cohen1982single}. Some work extend the classic queuing processes to deal with multiple rational customers who can opt-out of joining a queue~\cite{Altman2005} or strategic selection of scheduling criteria such as FCFS~\cite{ashlagi2013equilibria}.
However, our model has a adversarial interaction of two players on top of the queuing process, which, as far as we know, has not been addressed in the queuing theory literature.

\section{Conclusion}
We performed a red team evaluation of a cyber-alert inspection system that is under consideration for deployment. We observed the system to be robust to the best responding adversarial alert generation policy, but we also showed that a weakness in the model assumption could be exploited by an attacker. We provided a game theoretic formulation of the problem, which allowed us to understand the defender-adversary interaction. In particular, we showed that players' resources (budget) are a critical factor in deciding whether the defender policy can be robust. The theory also yielded simple, and  sub-optimal but usable defender policies. Further, using game theoretic insight we made the defender RL approach robust to the discovered successful attacker policy. The adversarial RL and double oracle approach in RL are general techniques that are applicable to other RL usage in adversarial environments.

\section{Acknowledgement}
This work was supported in part by the Army Research Office under MURI grant W911NF-13-1-0421.

\clearpage

\bibliography{aaai19}
\bibliographystyle{aaai}

\clearpage

\appendix
\section*{Proof of Theorem~\ref{thmonly}}
\begin{proof}
First, the case when $Y = X + 4800$. Consider the attacker sending $B = 2400$ alerts every hour from the start. In 14 hours, the attackers sends $28,800 + 4,800 = 33,600$  additional alerts. In 14 hours, the number of arrivals $S$ from the Poisson process form a Poisson distribution with rate $14 \lambda$. The following concentration inequality is known~\cite{pollard}
$$
P(S \leq 14 \lambda - 0.3 \lambda) \leq \exp(- \frac{0.3^2 \lambda^2}{2 \times 14 \lambda})
 \leq 0.002$$
Thus, with probability $\geq 0.998$ the number of arrivals in 14 hours more than $13.7\lambda \geq 26,290$. The number that can be served by normal analysts in 14 hours is $14 \times 1920 = 26,880$, that is 590 of the additional alerts could be served by normal inspections. Thus, by the 14$^{th}$ hour even if the defender has used all additional resources $X$, there are still $33,600 - 28,800 - 590 = 4210$ alerts in the queue, which corresponds to $f(4210) = 0.955$. Thus, the utility is worse than $- 0.955*0.998 = - 0.953$  

Next, we consider the case when $Y < X$. In this case we first prove that the probability of the backlog being more than $B$ over the $N$ finite horizon is small.

A busy cycle for a M/D/1 queue is defined as a time period in which the queue length first goes to 0 starting from 0. It is known that all busy cycles are i.i.d., since a M/D/1 queue is a regenerative process~\cite{cohen1982single}. A busy cycle with time length less than $B/\mu$ hour will not have more than $B$ alerts in the queue, since by definition the server is busy and it can serve $< B$ alert in less than $B/\mu$ hour, so the max queue length cannot be $\geq B$. Thus, we focus on busy cycles of time length $\geq B/\mu$ hour. There can only be $N\mu/B$ such busy cycles in $N$ hours. 

In each busy cycle the probability of the queue length being at most $j$ is given by $P_j$, where it is known that $P_j \geq 1 - \frac{1}{j}$~\cite{cohen1982single}. Due to i.i.d. busy cycles the probability of queue length being less than $B$ over all the $N$ hours is then more than $(P_B)^{N\mu/B}$. 

Next, consider the event where the queue length remains below $B$ for all $N$ hours. Consider one run of the underlying queuing process with the queue lengths $x_1$ to $x_{336}$ at every hour. Any additional alerts $k$ sent by the adversary in the $j^{th}$ hour raises the subsequent queue lengths to $x_{i} + k$ for $i \geq j$. Let the adversary send alerts quantity $k_1, k_2, \ldots$ at $j_1, j_2, \ldots$ respectively. According to the defender's policy the queue length will always remain below $B$ after the defender action as long as the defender has enough resources to allocate. Thus, we will argue that the defender resources are not exhausted. Consider the maximum queue length without defender intervention between $j_1$ and $j_2$: $Q = \max\{x_i + k_1~|~j_1 \leq i < j_2\}$. Let the total defender intervention within $j_1, j_2$ be $d_1$. At any time, if the total defender intervention within $j_1, j_2$ is equal to $Q - B$, then the queue length remains below $\max\{x_i~|~j_1 \leq i < j_2\} \leq B$ beyond this point. The defender intervention also is never more than $Q - B$ because if so then the last time that the defender intervenes, he can reduce his resource allocation to be exactly $Q - B$ and achieve the goal of keeping queue below $B$. Thus, $d_1 \leq Q - B$. Next, since each $x_i \leq B$, we have $k_1 \geq Q-B \geq d_1$. In a similar manner, it can be seen that $k_i \geq d_i$ for all $i$. Thus, the total additional resources used by defender will always be less than the additional alerts sent by adversary. Since the budgets are same, the defender will not run of resources.

Given the maximum queue length remains below $B$, providing defender utility $-f(B)$ with probability more than $(P_B)^{N\mu/B}$ and assuming the worst case utility of $-1$ otherwise, we obtain the expected reward: $(1 + f(B) )(P_B)^{N\mu/B} - 1$
\end{proof}

\section*{Analyst Scheduling and Experimental Setup}
We directly quote from the previous CSOC-RL work~\cite{Shah2018}:

\noindent\textbf{Extra inspections}: On page 8 
\begin{quote}
``It is assumed that all analysts spend 80\% of their effort in a shift toward
alert analysis, and the rest of the time is spent on report writing, training, and on
generating signatures. Hence, an analyst could increase their effort on alert analysis
up to 20\% when the need arises, which will increase the service rate of alerts investigated
in a day.''
\end{quote}
Then again on page 9: 
\begin{quote}``When a disruptive event occurs, a CSOC manager, in the order of preference as
determined through our discussions with CSOC managers at the Army Research Lab,
would utilize the remainder 20\% of analyst time on alert analysis, spend some of their
own time to assist the analysts in clearing the alert backlog, and bring on-call analysts
to supplement the regular analyst workforce.''\end{quote}

\noindent\textbf{Experimental Setup}: First, we present the setup for the M/D/1 queue. This quote below is from page 13; the equations and citations in the quoted text refer to ones in the CSOC-RL paper; more details are in the CSOC-RL paper. 
\begin{quote}
``There are 10 clusters of sensors considered for the experiment. Typically, each cluster
has about 10 to 12 sensors. Based on Equation (5) (see Electronic Appendix for
equations), the average arrival rate $\lambda$ = 1,919 alerts per hour, and based on Equation
(6), the average service rate $\mu$ = 1,920 alerts per hour for the entire system (U = 0.8),
which combines all the sensors and analysts in the system. Hence, traffic intensity
$\rho < 1$ for the baseline case scenario, where $\rho = \lambda / \mu$ . It should be noted that the service
rate is based on 80\% effort spent on alert analysis, and at this effort value, the
alert service rate is slightly greater than the alert arrival rate (prevents the formation
of an infinite queue length of alerts). A larger difference between $\lambda$ and $\mu$ ($\lambda <\!< \mu$) at U = 0.8 would suggest that the analysts have idle time. Hence, it is customary
at the CSOC to set the value of $\lambda$ close to $\mu$ at U = 0.8 ($\rho < 1$), and deal with $\rho \geq 1$
situations by utilizing a) the remaining 20\% of analysts time on alert analysis, b) portions
of manager’s time, and c) on-call analysts if needed. The above parameter values
were obtained through discussions with the CSOC managers of both distributed and
centralized CSOCs [Zimmerman 2014], however, different values of $\lambda$ and $\mu$ could be
used to set the baseline value of avgTTA/hr for a CSOC with $\rho < 1$. After a few days
of CSOC operations, upon reaching a steady state under normal operating conditions,
the acceptable value of avgTTA/hr was determined to be 1 hour and the nominal average
queue length was 1,175 alerts at any given point in time.''\end{quote}
\begin{figure}[t]%
    \centering
    \subfigure[Defender resources for Fig.~\ref{fig:third}]{%
    \label{fig:extraone}%
    \includegraphics[scale=0.2]{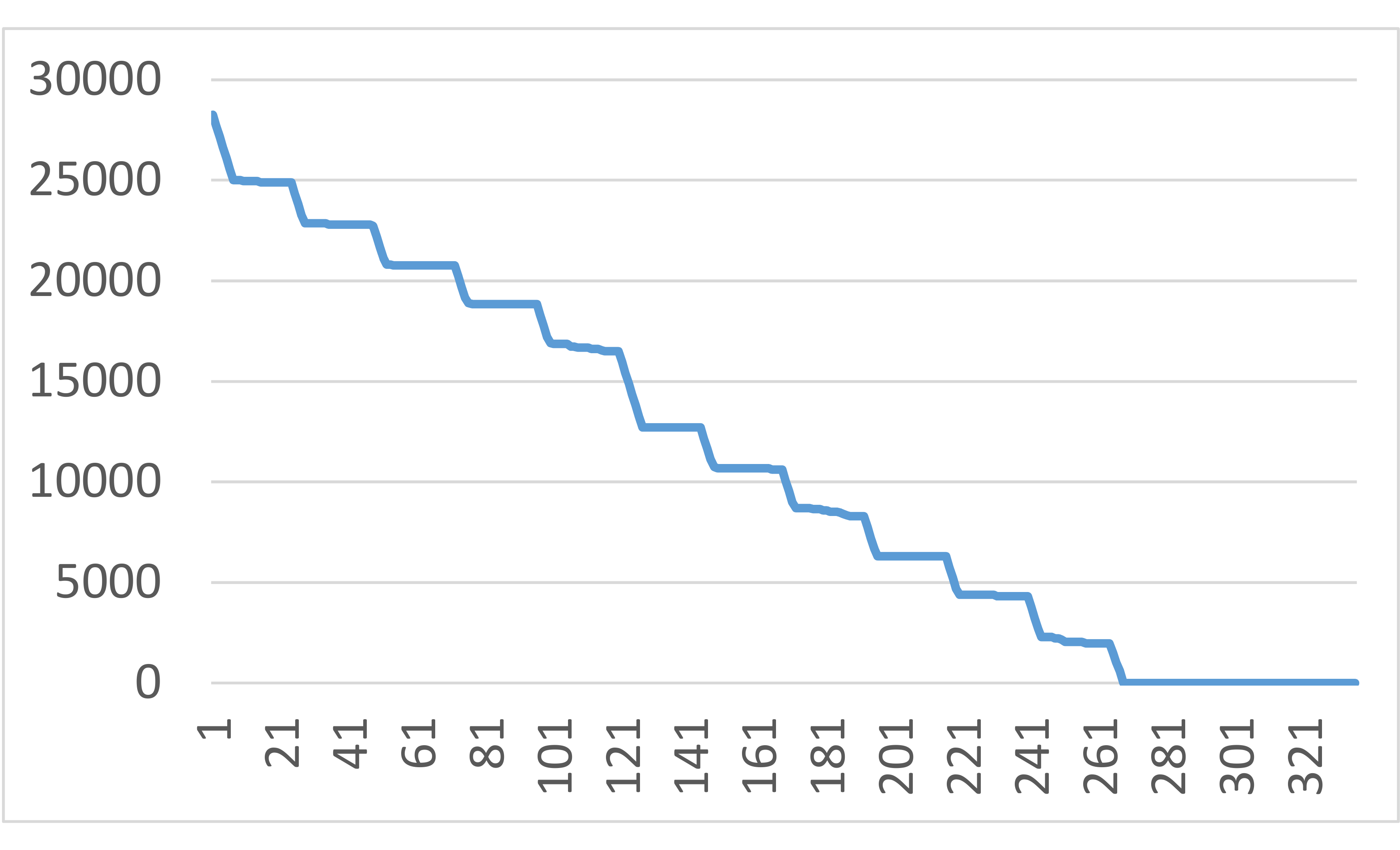}}%
    \quad
    \subfigure[Defender resources for Fig.~\ref{fig:ninth}]{%
    \label{fig:extratwo}%
    \includegraphics[scale=0.2]{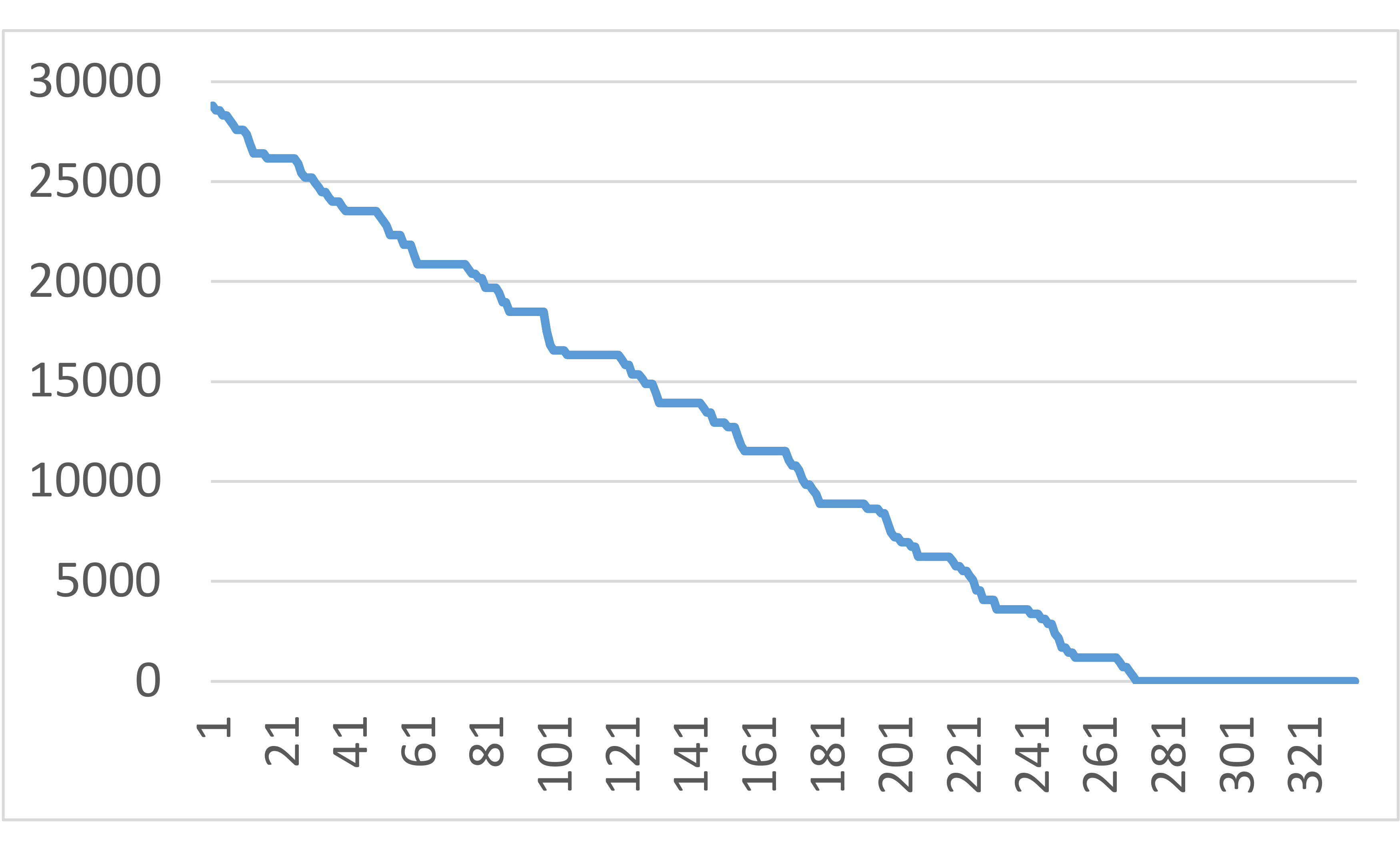}}%
\caption{Remaining defender resources over time}\label{fig:extra}
\end{figure}
\section*{Defender Remaining Resources Results}

Figure~\ref{fig:extra} shows the defender resources corresponding to the runs in Figure~\ref{fig:third} and~\ref{fig:ninth}. In both cases the defender resources get exhausted before the $N$ hours, and then the attacker is able to successfully push the backlog high.

\section*{Modeling assumption attack for $S1$ and $S2$}
Figure~\ref{fig:extraattack} shows the results of attacking $S1$ and $S2$ using the restricted daily bounded adversary with a finer discretization. It can be seen $S1$ is quite robust to the attack, whereas the more aggressive $S2$ suffers from the attack, which highlights our observation that a more patient policy is beneficial against this attack. For $S2$ the defender resources get exhausted before the $N$ hours, and then the attacker is able to successfully push the backlog high. Note that the piechart for $S1$ looks better than Fig.~\ref{fig:tenth} because this attack used a daily bounded adversary, unlike the unbounded adversary attack on $S1$ and $S2$ in Fig.~\ref{fig:tenth} and~\ref{fig:eleventh} respectively. 
\begin{figure}[t]%
    \centering
    \subfigure[Modeling assumption attack ($S1$) worst run]{%
    ~~\includegraphics[scale=0.18]{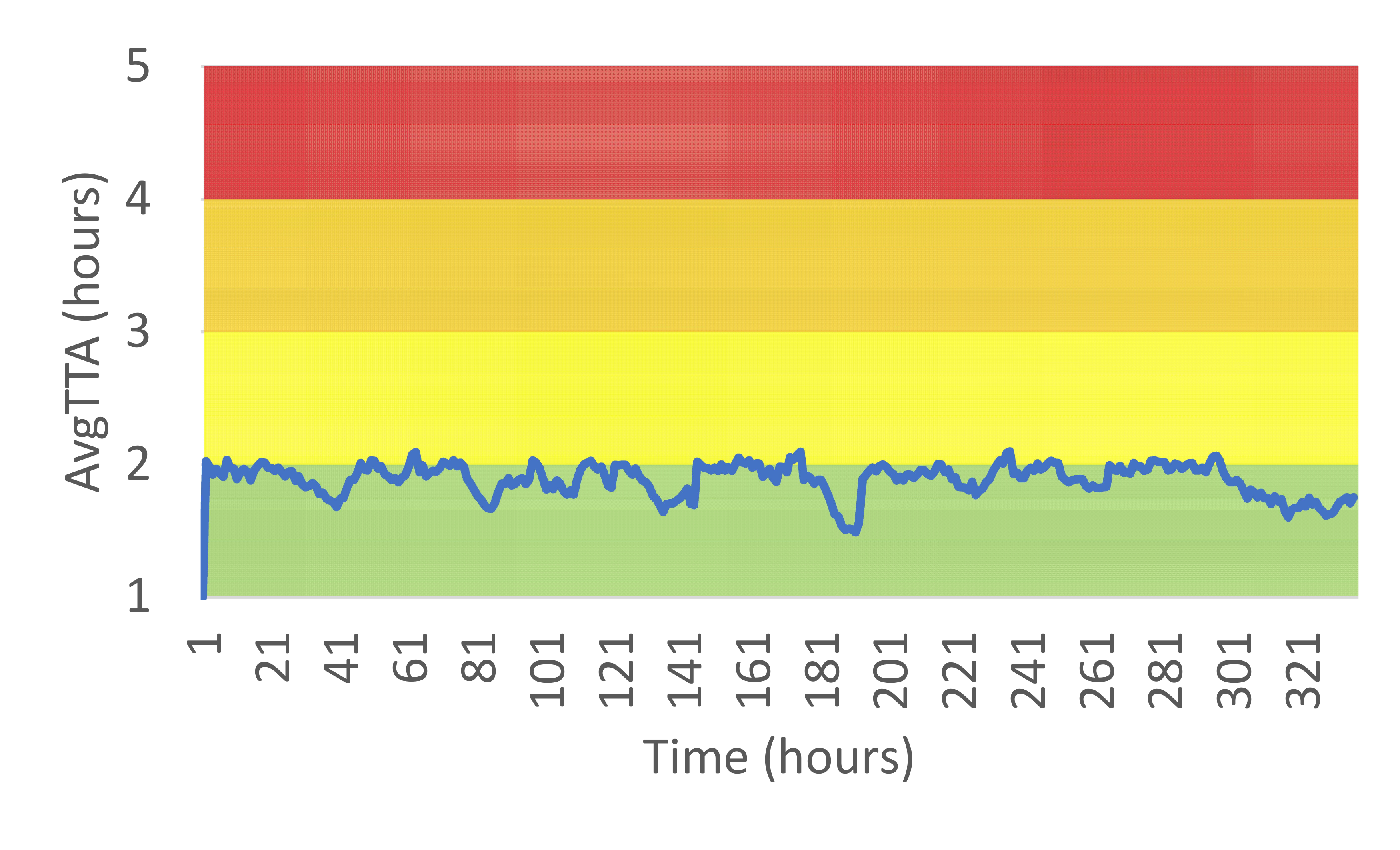}~~}%
    \quad
    \subfigure[Modeling assumption attack ($S2$) worst run]{%
    \label{fig:extratwopie1}%
    ~~\includegraphics[scale=0.18]{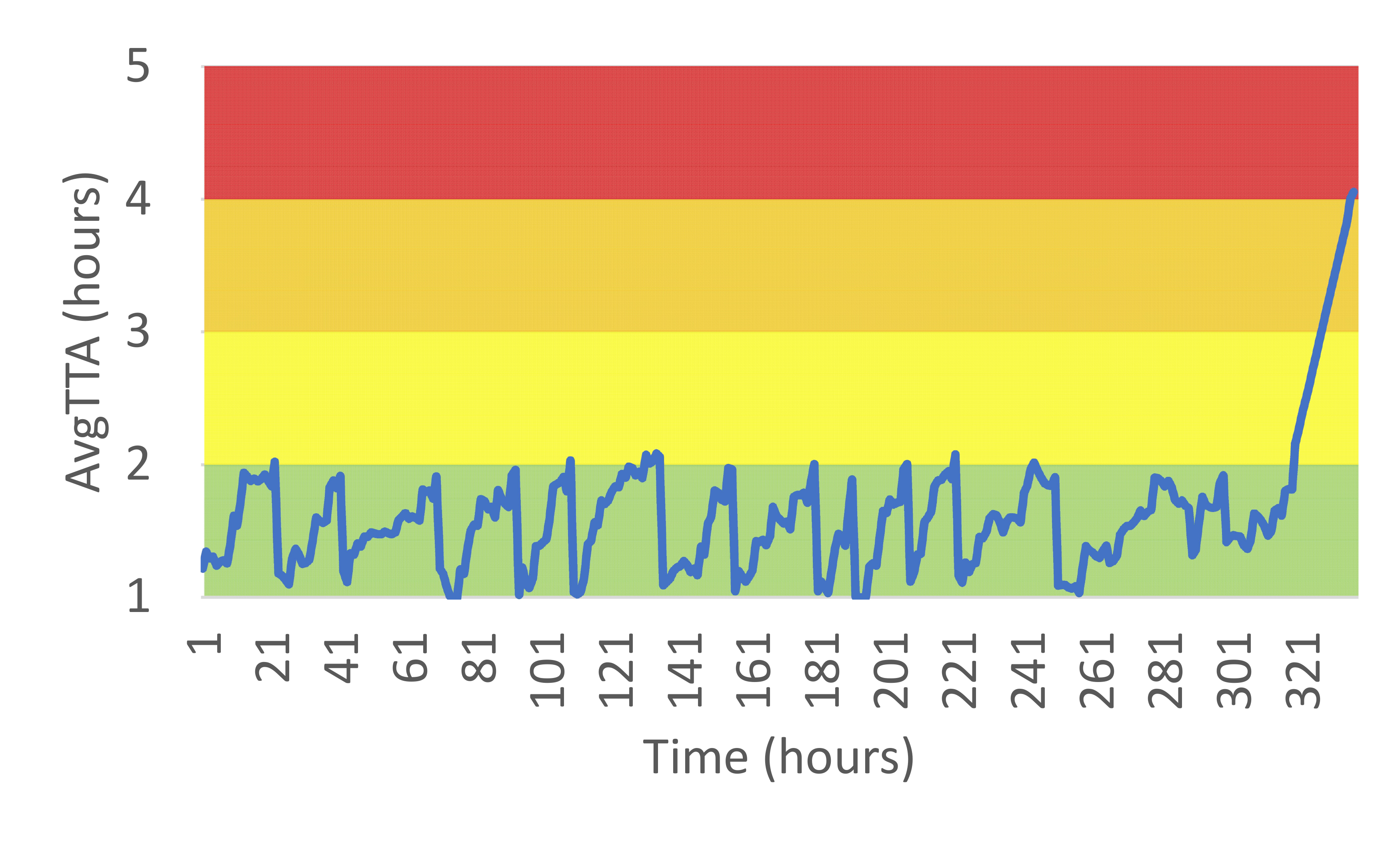}~~}%
    \quad
    \subfigure[Modeling assumption attack ($S1$) proportions]{%
    \includegraphics[scale=0.22]{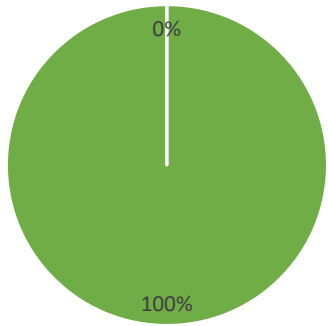}}%
    \quad%
    \subfigure[Modeling assumption attack ($S2$) proportions]{%
    \label{fig:extratwopie2}%
    \includegraphics[scale=0.22]{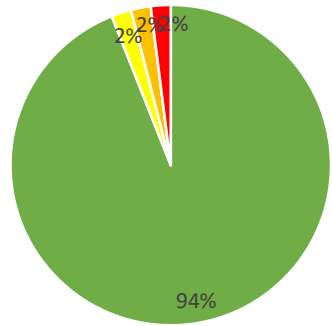}}%
\caption{Modeling assumption attack on $S1$ and $S2$}\label{fig:extraattack}
\end{figure}

\end{document}